\newtheorem{theorem}{Theorem}
\newtheorem{lemma}{Lemma}
\theoremstyle{remark}
\newtheorem{remark}{Remark}
\newcommand{\calP}{\mathcal{P}}
\newcommand{\Ahat}{\hat{A}}
\newcommand{\Ihat}{\hat{I}}
\newcommand{\Phat}{\hat{P}}
\newcommand{\Xhat}{\hat{X}}
\newcommand{\alphahat}{\hat{\alpha}}
\newcommand{\muhat}{\hat{\mu}}
\DeclareMathOperator{\poly}{poly}
\DeclarePairedDelimiter\abs{\lvert}{\rvert}
\DeclarePairedDelimiter\List{\langle}{\rangle}
\newcommand{\concat}{\cdot}
\newcommand{\DSMCYC}[1]{#1\textsc{-dsm-}\allowbreak\textsc{cyc}}
\newcommand{\DSMICYC}[1]{#1\textsc{-dsmi-}\allowbreak\textsc{cyc}}
\title{On the Existence of Three-Dimensional Stable Matchings \\
with Cyclic Preferences%
\thanks{Department of Computer Science, University of Texas at Austin, 2317
Speedway, Stop D9500, Austin, TX 78712--1757. Email:
\{geocklam, plaxton\}@cs.utexas.edu.}}
\author{Chi-Kit Lam \and C. Gregory Plaxton}
\begin{document}

\maketitle

\begin{abstract}


We study the three-dimensional stable matching problem with cyclic preferences.
This model involves three types of agents, with an equal number of agents of
each type. The types form a~cyclic order such that each agent has a~complete
preference list over the agents of the next type. We consider the open problem
of the existence of three-dimensional matchings in which no triple of agents
prefer each other to their partners. Such matchings are said to be weakly
stable. We show that contrary to published conjectures, weakly stable
three-dimensional matchings need not exist. Furthermore, we show that it is
NP-complete to determine whether a~weakly stable three-dimensional matchings
exists. We achieve this by reducing from the variant of the problem where
preference lists are allowed to be incomplete. Our results can be generalized
to the $k$-dimensional stable matching problem with cyclic preferences for
$k \geq 3$.

\end{abstract}


\section{Introduction}

The study of stable matchings was started by Gale and Shapley~\cite{GS62}, who
investigated a~market with two types of agents. The two-dimensional stable
matching problem involves an equal number of men and women, each of whom has
a~complete preference list over the agents of the opposite sex. The goal is to
find a~matching between the men and the women such that no man and woman prefer
each other to their partners. Matchings satisfying this property are said to be
stable. Gale and Shapley showed that a~solution for the two-dimensional stable
matching problem always exists and can be computed in polynomial time. Their
result also applies to the variant where preference lists may be incomplete
due to unacceptable partners, and the number of men may be different from
the number of women.

The problem of generalizing stable matchings to markets with three types of
agents was posed by Knuth~\cite{Knu97}. In pursuit of an existence theorem
and an elegant theory analogous to those of the Gale-Shapley model, the
three-dimensional stable matching problem has been studied with respect to
a~number of preference structures. When each agent has preferences over pairs
of agents from the other two types, stable matchings need not
exist~\cite{Alk88,NH91}. Furthermore, it is NP-complete to determine whether
a~stable matching exists~\cite{NH91,Sub94}, even if the preferences are
consistent with product orders~\cite{Hua07}. When two types of agents care
primarily about each other and secondarily about the remaining type, a~stable
matching always exists and can be obtained by computing two-dimensional stable
matchings using the Gale-Shapley algorithm in a~hierarchical
manner~\cite{Dan03}. When the types form a~cyclic order such that each type of
agent cares primarily about the next type and secondarily about the other type,
stable matchings need not exist~\cite{BGJK04}.

A~prominent problem mentioned in several of the aforementioned
papers~\cite{BGJK04,Hua07,NH91} is the three-dimensional stable matching
problem for the case where the types form a~cyclic order such that each type of
agent cares only about the next type and not the other type. Following the
terminology of the survey of Manlove~\cite{Man13}, we call this the
three-dimensional stable matching problem with cyclic preferences
($\DSMCYC{3}$), and refer to the three types of agents as men, women, and dogs.
A~number of stability notions~\cite{Hua07} can be considered in $\DSMCYC{3}$.
In this paper, we focus on weak stability, which is the most permissive one and
has received the most attention in the literature. It is known that determining
whether a~$\DSMCYC{3}$ instance has a~strongly stable matching is
NP-complete~\cite{BM10}. For the variant where ties are allowed, determining
the existence of a~super-stable matching is also NP-complete~\cite{Hua10}.
However, it remained an open problem for weakly stable matchings in
$\DSMCYC{3}$.

In $\DSMCYC{3}$, there are an equal number of men, women, and dogs. Each man
has a~complete preference list over the women, each woman has a~complete
preference list over the dogs, and each dog has a~complete preference list over
the men. A~family is a~triple consisting of a~man, a~woman, and a~dog.
A~matching is a~set of agent-disjoint families. A~family is strongly blocking if
every agent in the family prefers each other to their partners in the matching.
A~matching is weakly stable if it admits no strongly blocking family. This
problem is related to applications such as kidney exchange~\cite{BM10} and
three-sided network services~\cite{CJ13}.

The formulation of $\DSMCYC{3}$ first appeared in the paper of Ng and
Hirsch\-berg~\cite{NH91}, where it is attributed to Knuth. Using a~greedy
approach, Boros et al.~\cite{BGJK04} showed that every $\DSMCYC{3}$ instance
with at most~$3$ agents per type has a~weakly stable matching. Their result
also applies to the $k$-dimensional generalization of the problem, which we
call $\DSMCYC{k}$. For $k \geq 3$, they showed that every $\DSMCYC{k}$ instance
with at most~$k$ agents per type has a~weakly stable matching. Using a case
analysis, Eriksson et al.~\cite{ESS06} showed that every $\DSMCYC{3}$ instance
with at most~$4$ agents per type has a~weakly stable matching. Despite
unbalancedness and the lack of stable effective functions, Eriksson et
al.\ conjectured that every $\DSMCYC{3}$ instance has a~weakly stable matching.
More precisely, they conjectured that for an extension of $\DSMCYC{3}$ with the
strongest link rule, every instance with at least~$2$ agents per type has at
least two weakly stable matchings. Using an efficient greedy procedure,
Hofbauer~\cite{Hof16} showed that for $k \geq 3$, every $\DSMCYC{k}$ instance
with at most~$k+1$ agents per type has a~weakly stable matching. Using
a~satisfiability problem formulation and an extensive computer-assisted search,
Pashkovich and Poirrier~\cite{PP18} showed that every $\DSMCYC{3}$ instance
with exactly~$5$ agents per type has at least two weakly stable matchings.
Escamocher and O'Sullivan~\cite{EO18} showed that the number of weakly stable
matchings is exponential in the size of the $\DSMCYC{3}$ instance if agents of
the same type are restricted to have the same preferences. They also
conjectured that for unrestricted $\DSMCYC{3}$ instances, there are
exponentially many weakly stable matchings.

Hardness results are known for some related problems. For the variant
of $\DSMCYC{3}$ where preference lists are allowed to be incomplete, which we
refer to as $\DSMICYC{3}$, Bir\'{o} and McDermid~\cite{BM10} showed that
determining whether a~weakly stable matching exists is NP-complete.
Farczadi et al.~\cite{FGK16} showed that determining whether a given perfect
two-dimensional matching can be extended to a three-dimensional weakly stable
matching in $\DSMCYC{3}$ is also NP-complete. However, the existence of weakly
stable matchings in $\DSMCYC{3}$ remained unresolved. Manlove~\cite{Man13}
described it as an ``intriguing open problem", and Woeginger~\cite{Woe13}
classified it as ``hard and outstanding".

\subsubsection*{Our Techniques and Contributions}

In this paper, we show that there exists a~$\DSMCYC{3}$ instance that has no
weakly stable matching. This disproves the conjectures of Eriksson et
al.~\cite{ESS06} and Escamocher and O'Sullivan~\cite{EO18}. Furthermore, we
show that determining whether a~$\DSMCYC{3}$ instance has a~weakly stable
matching is NP-complete. We achieve this by reducing from the problem of
determining whether a~$\DSMICYC{3}$ instance has a weakly stable matching. Our
results generalize to $\DSMCYC{k}$ for $k \geq 3$.

Our main technique involves converting each agent in $\DSMICYC{3}$ to a~gadget
consisting of one non-dummy agent and many dummy agents. The dummy agents in
our gadget give rise to chains of admirers. (See Remark~\ref{rem:admirer} in
Section~\ref{sec:kdsmcyc-gadget}.) By applying the weak stability condition to
the chains of admirers, we are able to obtain some control over the partner of
the non-dummy agent.

\subsubsection*{Organization of This Paper}

In Section~\ref{sec:preliminaries}, we present the formal definitions of
$\DSMCYC{k}$ and $\DSMICYC{k}$. In Section~\ref{sec:kdsmicyc}, we show that
the NP-completeness result of Bir\'{o} and McDermid~\cite{BM10} can be extended
to $\DSMICYC{k}$. In Section~\ref{sec:kdsmcyc}, we show that $\DSMCYC{k}$ is
NP-complete by a~reduction from $\DSMICYC{k}$. In Section~\ref{sec:conclusion}, we
conclude by mentioning some potential future work.

\section{Preliminaries}
\label{sec:preliminaries}

In this paper, we use $\List{z \in Z \mid \calP(z)}$ to denote the list of
all tuples $z \in Z$ satisfying predicate $\calP(z)$, where the tuples are
sorted in increasing lexicographical order. Given two lists~$Y$ and~$Z$, we
denote their concatenation as $Y \concat Z$. For any $k \geq 1$, we use
$\oplus_k$ to denote addition modulo~$k$.

\subsection{The Models}

Let $k \geq 2$. The $k$-dimensional stable matching problem with incomplete
lists and cyclic preferences ($\DSMICYC{k}$) involves a~finite set $A =
I \times \{0, \dots, k-1\}$ of agents, where each agent
$\alpha = (i, t) \in A$ is associated with an identifier~$i$ and a~type~$t$.
(When $k = 3$, we can think of the sets $I \times \{0\}$, $I \times \{1\}$,
and $I \times \{2\}$ as the sets of men, women, and dogs, respectively.) Each
agent~$\alpha = (i, t) \in A$ has a~strict preference list~$P_\alpha$ over
a~subset of agents of type~$t' = t \oplus_k 1$. In other words, every agent in
$I \times \{ t \oplus_k 1 \}$ appears in~$P_\alpha$ at most once, and every
element in~$P_\alpha$ belongs to $I \times \{ t \oplus_k 1 \}$. For every
$\alpha, \alpha', \alpha'' \in A$, we say that~$\alpha$ prefers~$\alpha'$
to~$\alpha''$ if~$\alpha'$ appears in~$P_\alpha$ and either agent~$\alpha''$
appears in~$P_\alpha$ after~$\alpha'$ or agent~$\alpha''$ does not appear
in~$P_\alpha$. We denote this $\DSMICYC{k}$ instance as
$X = (A, \{ P_\alpha \}_{\alpha \in A} )$.

Given a~$\DSMICYC{k}$ instance $X = (A, \{ P_\alpha \}_{\alpha \in A} )$,
a~\emph{family} is a~tuple
\begin{equation*}
(\alpha_0, \dots, \alpha_{k-1}) \in A^k
\end{equation*}
such that~$\alpha_t \in I \times \{t\}$ and $\alpha_{t \oplus_k 1}$ appears
in~$P_{\alpha_t}$ for every $t \in \{0, \dots, k-1\}$. A~\emph{matching}~$\mu$
is a~set of agent-disjoint families. In other words, for every
$(\alpha_0, \dots, \alpha_{k-1}), (\alpha'_0, \dots, \alpha'_{k-1}) \in \mu$
and $t, t' \in \{0, \dots, k-1\}$, if $\alpha_t = \alpha'_t$, then
$\alpha_{t'} = \alpha'_{t'}$. Given a~matching~$\mu$ and an agent
$\alpha \in A$, if $\alpha = \alpha_t$ for some
$(\alpha_0, \dots, \alpha_{k-1}) \in \mu$ and $t \in \{0, \dots, k-1\}$,
we say that~$\alpha$ is matched to~$\alpha_{t \oplus_k 1}$, and we write
$\mu(\alpha) = \alpha_{t \oplus_k 1}$. Otherwise, we say that~$\alpha$ is
unmatched, and we write $\mu(\alpha) = \alpha$.

Given a~matching~$\mu$, we say that a~family $(\alpha_0, \dots, \alpha_{k-1})$
is \emph{strongly blocking} if~$\alpha_t$ prefers~$\alpha_{t \oplus_k 1}$ to
$\mu(\alpha_t)$ for every $t \in \{0, \dots, k-1\}$. A matching~$\mu$ is
\emph{weakly stable} if it does not admit any strongly blocking family.

The $k$-dimensional stable matching problem with cyclic preferences
($\DSMCYC{k}$) is defined as the special case of $\DSMICYC{k}$ in which every
agent in $I \times \{ t \oplus_k 1\}$ appears exactly once in~$P_\alpha$ for every
agent $\alpha = (i, t) \in A$.

Notice that when incomplete lists are allowed, the case of an unequal number of
agents of each type can be handled within our $\DSMICYC{k}$ model by padding
with dummy agents whose preference lists are empty. Hence, the results of
Bir\'{o} and McDermid~\cite{BM10} apply to our $\DSMICYC{3}$ model. When
preference lists are complete, we follow the literature and focus on the case
where each type has an equal number of agents. Our result shows that even when
restricted to the case of an equal number of agents of each type, a~given
$\DSMCYC{k}$ instance need not admit a~weakly stable matching, and determining
the existence of a~weakly stable matching is NP-complete.

\subsection{Polynomial-Time Verification}

Given a~matching~$\mu$ of a~$\DSMICYC{k}$ instance with~$n$ agents per type,
it is straightforward to determine whether~$\mu$ is weakly stable in
$O(n^k)$~time by checking that none of the $O(n^k)$ families is strongly
blocking. The following theorem shows that when~$k$ is large, there is a~more
efficient method to determine whether a~given matching is weakly stable.

\begin{theorem} \label{thm:certificate}
There exists a~$\poly(n, k)$-time algorithm to determine
whether a~given matching~$\mu$ is weakly stable for a~$\DSMICYC{k}$ instance,
where~$n$ is the number of agents per type.
\end{theorem}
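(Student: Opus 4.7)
The plan is to reformulate weak stability as a cycle-detection problem in an auxiliary directed graph. I would construct a graph~$G$ whose vertex set is~$A$ and which contains a directed edge from~$\alpha$ to~$\beta$ precisely when~$\alpha$ prefers~$\beta$ to its current partner~$\mu(\alpha)$. This graph has~$nk$ vertices and can be built in~$\poly(n, k)$ time by a single pass over the preference lists and the matching. Unpacking the definitions, a family $(\alpha_0, \dots, \alpha_{k-1})$ is strongly blocking precisely when $\alpha_0 \to \alpha_1 \to \cdots \to \alpha_{k-1} \to \alpha_0$ is a closed walk of length~$k$ in~$G$.

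The key structural observation is that~$G$ is $k$-partite with every edge running from type~$t$ to type~$t \oplus_k 1$. Hence any closed walk of length~$k$ in~$G$ visits exactly one vertex of each type, so its vertices are pairwise distinct and the walk is a simple cycle that can be read off as a valid family. Detecting a strongly blocking family is therefore equivalent to detecting a closed walk of length exactly~$k$ in~$G$; crucially, one does not need to worry about repeated vertices along the walk.

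To carry out this detection, I would compute the boolean $k$-th power of the adjacency matrix~$M$ of~$G$ by repeated squaring, which uses $O(\log k)$ boolean matrix multiplications on an $(nk) \times (nk)$ matrix. Then~$\mu$ fails to be weakly stable if and only if $M^k$ has a nonzero diagonal entry, and the overall running time is $O((nk)^3 \log k)$, which is $\poly(n, k)$ as required. There is no serious obstacle in this argument; the only subtlety, namely the equivalence between strongly blocking families and arbitrary closed walks of length~$k$, is handled cleanly by the $k$-partite structure noted above, so the remaining work is just verifying the edge-construction step and bounding the matrix-powering cost.
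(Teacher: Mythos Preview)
Your proposal is correct and follows essentially the same approach as the paper: both build the identical directed graph~$G$ and exploit the $k$-partite structure (edges only from type~$t$ to type~$t \oplus_k 1$) to equate strongly blocking families with short cycles in~$G$. The paper phrases the final step as detecting a cycle of length at most~$k$ (equivalent, since no shorter cycle exists) and leaves the algorithm implicit, whereas you detect closed walks of length exactly~$k$ via boolean matrix powering; these are equivalent formulations of the same idea, and your version has the virtue of making the running time explicit.
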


\begin{proof}
Given~$\mu$, consider the directed graph~$G$ with vertex set~$A$ such that
there exists an edge from~$\alpha$ to~$\alpha'$ if and only if~$\alpha$
prefers~$\alpha'$ to~$\mu(\alpha)$. Then cycles in~$G$ of length~$k$
correspond to strongly blocking families of~$\mu$. Notice that if a~cycle in~$G$
contains an agent in $I \times \{t\}$, then it also contains
an agent in $I \times \{t \oplus_k 1\}$. Hence no cycle in~$G$ has length
less than~$k$. Thus determining whether~$\mu$ is weakly stable is equivalent
to determining whether the directed graph~$G$ has a~cycle of length at
most~$k$, which can be done in $\poly(n, k)$ time.
\end{proof}

\section{NP-Completeness of \texorpdfstring{$k$}{k}-DSMI-CYC}
\label{sec:kdsmicyc}

In this section, we show that for every $k \geq 3$, it is NP-complete to
determine whether a~given $\DSMICYC{k}$ instance has a~weakly stable matching.
We achieve this by reducing from the problem of determining whether
a~$\DSMICYC{3}$ instance has a~weakly stable matching.

\subsection{The Reduction}
\label{sec:kdsmicyc-reduction}

Let $k \geq 4$. Consider an input~$\DSMICYC{3}$ instance
  $X = (A, \{ P_\alpha \}_{\alpha \in A} )$ where $A = I \times \{0, 1, 2\}$.
Our reduction constructs a~$\DSMICYC{k}$ instance
  $\Xhat = (\Ahat, \{ \Phat_{\alphahat} \}_{\alphahat \in \Ahat} )$ as follows.
\begin{itemize}
\item
Let $\Ihat = I \times I$ and $\Ahat = I \times I \times \{0, \dots, k-1\}$. For
every agent $(i, t) \in A$, we call $\alphahat = (i, i, t) \in \Ahat$
the non-dummy agent corresponding to $(i, t)$. We call the agents
\begin{equation*}
  \{ (i, j, t) \in \Ahat \mid t \notin \{0, 1, 2\} \text{ or } i \neq j\}
\end{equation*}
dummy agents.
\item
For every agent $\alphahat = (i, j, t) \in \Ahat$, we construct the
preference list~$\Phat_{\alphahat}$ as follows.
\begin{itemize}
\item
If $0 \leq t \leq 1$ and $i = j$, we list in~$\Phat_{\alphahat}$ the agents
\begin{equation*}
  \{ (i', j', t') \in I \times I \times \{t + 1\} \mid
  i' = j' \text{ and } (i', t') \text{ is in } P_{(i, t)} \}
\end{equation*}
in the order in which the corresponding agent $(i', t')$ appears
in~$P_{(i, t)}$.
\item
If $t = 2$ and $i = j$, we list in~$\Phat_{\alphahat}$ the agents
\begin{equation*}
  \{ (i', j', t') \in I \times I \times \{3\} \mid
  i' = i \text{ and } (j', 0) \text{ is in } P_{(i, 2)}\}
\end{equation*}
in the order in which the corresponding agent $(j', 0)$ appears
in~$P_{(i, 2)}$.
\item
If $0 \leq t \leq 2$ and $i \neq j$, we define~$\Phat_{\alphahat}$ as the empty
list.
\item
If $3 \leq t \leq k - 2$ and $(j, 0)$ is in~$P_{(i, 2)}$, we
define~$\Phat_{\alphahat}$ as $\List{(i, j, t + 1)}$.
\item
If $t = k - 1$ and $(j, 0)$ is in~$P_{(i, 2)}$, we
define~$\Phat_{\alphahat}$ as $\List{(j, j, 0)}$.
\item
If $3 \leq t \leq k - 1$ and $(j, 0)$ is not in~$P_{(i, 2)}$, we
define~$\Phat_{\alphahat}$ as the empty list.
\end{itemize}
\end{itemize}
Figure~\ref{fig:reduction} shows an example of the reduction when $k = 5$
and $I = \{0, 1\}$.

\subsection{Correctness of the Reduction}

\begin{lemma} \label{lem:correctness}
Let $k \geq 4$. Consider the reduction given in
Section~\ref{sec:kdsmicyc-reduction}. The output $\DSMICYC{k}$ instance~$\Xhat$
has a~weakly stable matching if and only if the input $\DSMICYC{3}$
instance~$X$ has a~weakly stable matching.
\end{lemma}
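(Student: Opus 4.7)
The plan is to establish a structural bijection between valid families in $\Xhat$ and valid families in $X$, and then transport weak stability in both directions. The structural observation is that any family $(\alphahat_0, \dots, \alphahat_{k-1})$ in $\Xhat$ is forced into a rigid form: since $\Phat_{\alphahat}$ is empty whenever $\alphahat = (i, j, t)$ is a dummy with $t \in \{0, 1, 2\}$, the agents $\alphahat_0, \alphahat_1, \alphahat_2$ must be non-dummies of the form $(i_t, i_t, t)$; tracing the singleton preference lists of the chain then forces $\alphahat_s = (i_2, j, s)$ for every $3 \leq s \leq k-1$ with a common $j$ satisfying $(j, 0) \in P_{(i_2, 2)}$, and the wrap-around constraint forces $\alphahat_0 = (j, j, 0)$. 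Thus $\Xhat$-families correspond bijectively to $X$-families $((j, 0), (i_1, 1), (i_2, 2))$, and under this correspondence the preference orders at types $0, 1, 2$ mirror the original preferences.

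For the forward direction, given a weakly stable $\mu$ for $X$, I would let $\muhat$ be the set of $\Xhat$-families lifted from $\mu$-families, leaving all remaining agents unmatched. A chain $(i_2, j, 3), \dots, (i_2, j, k-1)$ is matched in $\muhat$ precisely when $\mu((i_2, 2)) = (j, 0)$, in which case the lifted family already lies in $\muhat$; otherwise the chain dummies are unmatched and prefer their unique acceptable partner. Any candidate strongly blocking family in $\Xhat$ has the rigid form above, and the preference correspondence turns it into a strongly blocking family for $\mu$, contradicting weak stability. For the reverse direction, given a weakly stable $\muhat$, the structural lemma lets me project each $\muhat$-family to an $X$-family, producing a matching $\mu$; any family $((i_0, 0), (i_1, 1), (i_2, 2))$ strongly blocking $\mu$ lifts to the $\Xhat$-family $((i_0, i_0, 0), (i_1, i_1, 1), (i_2, i_2, 2), (i_2, i_0, 3), \dots, (i_2, i_0, k-1))$, which strongly blocks $\muhat$: the three non-dummy agents strictly prefer their new partners by the preference correspondence (with unmatched original agents trivially accepting any acceptable partner), and since $\mu((i_2, 2)) \neq (i_0, 0)$ the chain dummies $(i_2, i_0, s)$ are all unmatched in $\muhat$ and thus happy to accept their unique next-chain partner.

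The main obstacle is bookkeeping the matching status of chain dummies, since this underlies the preference comparisons at every chain position in any candidate blocking family; the structural lemma resolves this by ruling out any $\muhat$-family that deviates from the rigid form, so the status of each chain dummy is pinned down by the projected matching $\mu$ and aligns with the $X$-side analysis.
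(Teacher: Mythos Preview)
Your proposal is correct and follows essentially the same approach as the paper: both hinge on the observation that every family in $\Xhat$ is forced into the rigid shape $((i_0,i_0,0),(i_1,i_1,1),(i_2,i_2,2),(i_2,i_0,3),\dots,(i_2,i_0,k-1))$ with $((i_0,0),(i_1,1),(i_2,2))$ a family in $X$, so that families, matchings, and strongly blocking families correspond bijectively. The paper compresses this into a single bijection-of-matchings argument, while you unpack the two directions separately and trace the chain-dummy bookkeeping explicitly; the extra detail is sound but not a genuinely different route.
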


\begin{proof}
Since every family in~$\Xhat$ has the form
\begin{equation*}
( (i_0, i_0, 0), (i_1, i_1, 1), (i_2, i_2, 2),
  (i_2, i_0, 3), \dots, (i_2, i_0, k-1) ),
\end{equation*}
where $(i_0, i_1, i_2)$ is a~family in~$X$, there is a-one-to-one correspondence
between families in~$\Xhat$ and families in~$X$. This induces a~one-to-one
correspondence between matchings in~$\Xhat$ and matchings in~$X$. It is
straightforward to see that a~family in~$\Xhat$ is a~strongly blocking family
of a~matching in~$\Xhat$ if and only if the corresponding family in~$X$ is
a~strong blocking family of the corresponding matching in~$X$. Hence~$\Xhat$ has
a~weakly stable matching if and only if~$X$ has a~weakly stable matching.
\end{proof}

\begin{theorem} \label{thm:example}
Let $k \geq 3$. Then there exists a~$\DSMICYC{k}$ instance that has no weakly
stable matching.
\end{theorem}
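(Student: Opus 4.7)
The plan is a short two-step argument that leverages both prior literature and the reduction developed earlier in this section. For the base case $k = 3$, I would invoke the NP-completeness result of Bir\'{o} and McDermid~\cite{BM10}: their reduction maps NO-instances of a known NP-complete source problem to $\DSMICYC{3}$ instances with no weakly stable matching, so at least one such instance must exist. (More abstractly, any nontrivial NP-complete decision problem must admit NO-instances, since otherwise the always-YES procedure would decide it in constant time and NP-hardness would collapse the complexity classes.)

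For $k \geq 4$, I would apply Lemma~\ref{lem:correctness} exactly once. Fix a $\DSMICYC{3}$ instance $X$ with no weakly stable matching, as just established. Feed $X$ into the reduction of Section~\ref{sec:kdsmicyc-reduction} to obtain a $\DSMICYC{k}$ instance $\Xhat$. By Lemma~\ref{lem:correctness}, $\Xhat$ has a weakly stable matching if and only if $X$ does; since $X$ has none, neither does $\Xhat$. This handles all $k \geq 4$ uniformly, without any induction on $k$.

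There is essentially no obstacle left at this stage, because the substantive work has already been discharged: Lemma~\ref{lem:correctness} establishes a bijection between families of $\Xhat$ and families of $X$ that preserves the strongly-blocking property, and the base case $k = 3$ is immediate from the Bir\'{o}--McDermid reduction. Thus the theorem is a direct corollary of Lemma~\ref{lem:correctness} together with the NP-completeness result of~\cite{BM10}. If one preferred a fully self-contained statement, one could replace the appeal to~\cite{BM10} by exhibiting an explicit small $\DSMICYC{3}$ NO-instance and then applying the reduction, but this is not required by the theorem as stated.
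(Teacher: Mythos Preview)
Your proposal is correct and follows essentially the same approach as the paper: invoke Bir\'{o} and McDermid for the base case $k=3$, then apply Lemma~\ref{lem:correctness} once for each $k \geq 4$. The only minor difference is that the paper cites \cite[Lemma~1]{BM10}, which exhibits an explicit $\DSMICYC{3}$ instance with no weakly stable matching, rather than deducing the existence of such an instance indirectly from the NP-completeness theorem.
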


\begin{proof}
Bir\'{o} and McDermid~\cite[Lemma~1]{BM10} show that there exists
a~$\DSMICYC{3}$ instance~$X$ that has no weakly stable matching. So we may
assume that $k \geq 4$. Then Lemma~\ref{lem:correctness} implies that given~$X$
as an input, the reduction in Section~\ref{sec:kdsmicyc-reduction} produces
a~$\DSMICYC{k}$ instance~$\Xhat$ that has no weakly stable matching.
\end{proof}

\begin{theorem} \label{thm:hardness}
Let $k \geq 3$. Then it is NP-complete to determine whether a~$\DSMICYC{k}$
instance has a~weakly stable matching.
\end{theorem}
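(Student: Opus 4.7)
The plan is to establish the two standard ingredients: membership in NP and NP-hardness, handling the base case $k=3$ separately from $k \geq 4$.

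For membership in NP, I would use a weakly stable matching as the certificate. A matching is specified by listing its families, which takes $\poly(n, k)$ bits where $n$ is the number of agents per type. Verification that the certificate is indeed a matching (agent-disjoint families, each family respecting preferences) is routine, and weak stability of the certificate can then be checked in $\poly(n, k)$ time by directly invoking Theorem~\ref{thm:certificate}. Hence the problem lies in NP for every $k \geq 3$.

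For NP-hardness, the base case $k=3$ is exactly the result of Bir\'{o} and McDermid~\cite[Theorem~2]{BM10}, which I would cite directly. For $k \geq 4$, I would reduce from $\DSMICYC{3}$ using the construction of Section~\ref{sec:kdsmicyc-reduction}. Given an input $\DSMICYC{3}$ instance~$X$ on $A = I \times \{0,1,2\}$ with $|I| = n$, the output instance $\Xhat$ has agent set $\Ahat = I \times I \times \{0, \dots, k-1\}$ of size $k n^2$, and each preference list in $\Xhat$ has length at most $n$ (in the non-dummy $0 \leq t \leq 2$ cases) or at most one (in the $t \geq 3$ cases). Thus $\Xhat$ can be written down in time polynomial in $n$ for any fixed $k$, so the reduction runs in polynomial time. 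Correctness of the reduction---that $\Xhat$ admits a weakly stable matching if and only if $X$ does---is precisely Lemma~\ref{lem:correctness}. Combining the two directions yields NP-hardness for each fixed $k \geq 4$, completing the proof.

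I do not anticipate a substantive obstacle here: both ingredients have already been put in place (Theorem~\ref{thm:certificate} for verification, Lemma~\ref{lem:correctness} for correctness of the reduction, and \cite{BM10} for the base case), so the proof is essentially an assembly of previously proved pieces. The only item requiring a brief remark is the polynomial bound on the size of~$\Xhat$ as a function of the size of~$X$ for fixed~$k$, which is immediate from the construction.
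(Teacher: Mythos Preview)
Your proposal is correct and follows essentially the same route as the paper's own proof: NP membership via Theorem~\ref{thm:certificate}, the base case $k=3$ via Bir\'{o} and McDermid, and NP-hardness for $k\geq 4$ via the polynomial-time reduction of Section~\ref{sec:kdsmicyc-reduction} together with Lemma~\ref{lem:correctness}. The only cosmetic differences are that the paper cites \cite[Theorem~1]{BM10} rather than Theorem~2, and states the running time as $\poly(n,k)$ rather than polynomial in~$n$ for fixed~$k$.
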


\begin{proof}
Bir\'{o} and McDermid~\cite[Theorem~1]{BM10} show that it is NP-complete to
determine whether a~$\DSMICYC{3}$ instance has a~weakly stable matching. So we
may assume that $k \geq 4$. Then Lemma~\ref{lem:correctness} implies the
correctness of the reduction from $\DSMICYC{3}$ to $\DSMICYC{k}$ presented in
Section~\ref{sec:kdsmicyc-reduction}. Moreover, the reduction can be
implemented in $\poly(n, k)$ time, where~$n$ is the number of agents of each
type. Theorem~\ref{thm:certificate} implies that the problem of determining
whether a~$\DSMICYC{k}$ instance has a~weakly stable matching is in NP. Thus
the problem of determining whether a~$\DSMICYC{k}$ instance has a~weakly stable
matching is NP-complete.
\end{proof}

\begin{figure}[t]
\begin{center}
\subfloat[The input $\DSMICYC{3}$ instance.]{
  \includegraphics[scale=.9]{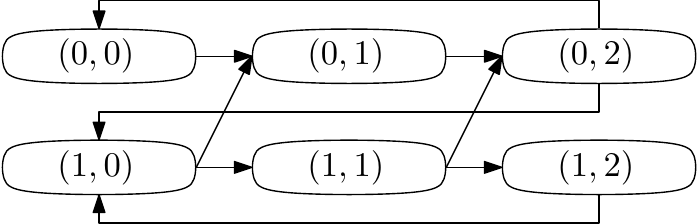}
}

\subfloat[The output $\DSMICYC{5}$ instance. Agents with empty preference lists
are omitted.]{
  \quad
  \includegraphics[scale=.9]{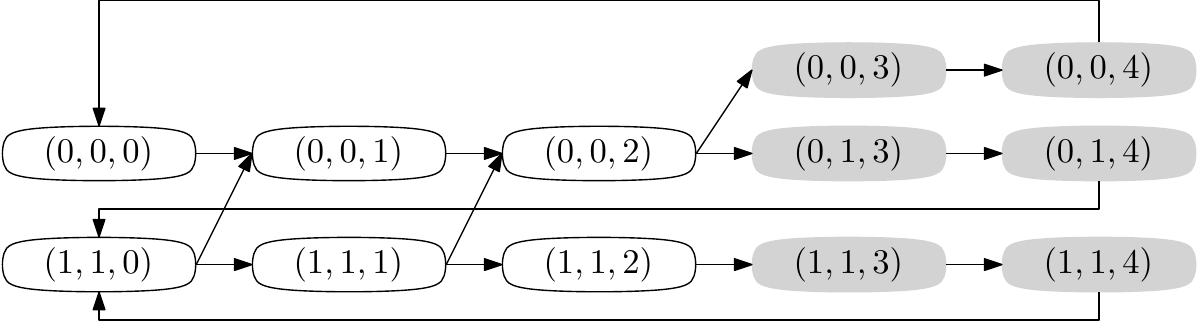}
  \quad
}
\end{center}
\caption{Example of a~reduction from $\DSMICYC{3}$ to $\DSMICYC{5}$. An arrow
indicates that the target agent appears in the preference list of the source
agent.}
\label{fig:reduction}
\end{figure}

\section{NP-Completeness of \texorpdfstring{$k$}{k}-DSM-CYC}
\label{sec:kdsmcyc}

In this section, we show that for every $k \geq 3$, it is NP-complete to
determine whether a~$\DSMCYC{k}$ instance has a~weakly stable matching. We
achieve this by reducing from the problem of determining whether
a~$\DSMICYC{k}$ instance has a~weakly stable matching. Since the dimensions of
both the input instance and the output instance of the reduction are equal
to~$k$, throughout this section, we write $\oplus$ instead of $\oplus_k$ for
better readability.

\subsection{The Reduction}
\label{sec:kdsmcyc-reduction}

Let $k \geq 3$. Consider an input $\DSMICYC{k}$ instance
$X = (A, \{ P_\alpha \}_{\alpha \in A} )$ where
$A = I \times \{0, \dots {k-1}\}$. We may assume that
$I = \{0, \dots, \abs{I} - 1\}$, so agents in~$A$ can be compared
lexicographically. Our reduction constructs a~$\DSMCYC{k}$ instance
  $\Xhat = (\Ahat, \{ \Phat_{\alphahat} \}_{\alphahat \in \Ahat} )$
as follows.

\begin{itemize}
\item
Let $J = \{0, \dots, (k - 1)^2\}$. Let $\Ihat = J \times A$ and
  $\Ahat = J \times A \times \{0, \dots, k - 1\}$.
For every agent $\alpha \in A$, we call
  $J \times \{\alpha\} \times \{0, \dots, k-1\}$
the gadget corresponding to~$\alpha$.
\item
For every agent $\alphahat = (j, \alpha, t) \in \Ahat$ such that $j = 0$ and
$\alpha \in I \times \{t\}$, we call~$\alphahat$ the non-dummy agent
corresponding to~$\alpha$. Let~$\Phat'_{\alpha}$ be the list obtained by
replacing every~$\alpha'$ in~$P_\alpha$ by~$(0, \alpha', t \oplus 1)$.
We define the preference list~$\Phat_{\alphahat}$ as
$\Phat'_{\alpha} \cdot
  \List{ (j', \alpha', t') \in J \times A \times \{t \oplus 1\} \mid
  \alpha' = \alpha }$
followed by the remaining agents in~$J \times A \times \{t \oplus 1\}$
in an arbitrary order.
\item
For every agent $\alphahat = (j, \alpha, t) \in \Ahat$ such that
$j = (k - 1)^2$, we call~$\alphahat$ a~boundary dummy agent, and we define
the preference list~$\Phat_{\alphahat}$ as
\begin{multline*}
\List{ (j', \alpha', t') \in J \times A \times \{t \oplus 1\} \mid
  \alpha' = \alpha \text{ and } j' < (k - 1)^2 } \\
\cdot \List{ (j', \alpha', t') \in J \times A \times \{t \oplus 1\} \mid
  j' = (k - 1)^2 }
\end{multline*}
followed by the remaining agents in~$J \times A \times \{t \oplus 1\}$
in an arbitrary order.
\item
For every agent $\alphahat = (j, \alpha, t) \in \Ahat$ such that
$(j, \alpha, t) \notin \{0\} \times (I \times \{t\}) \times \{t\}$ and
$j < (k-1)^2$, we call~$\alphahat$ a non-boundary dummy agent, and we
define the preference list~$\Phat_{\alphahat}$ as
$\List{ (j', \alpha', t') \in J \times A \times \{t \oplus 1\} \mid
  \alpha' = \alpha }$
followed by the remaining agents in~$J \times A \times \{t \oplus 1\}$
in an arbitrary order.
\end{itemize}
As shown in Figure~\subref*{fig:gadget-structure}, the gadget corresponding
to~$\alpha \in I \times \{t\}$ can be visualized as a~grid of agents with
$k$~rows and $(k-1)^2$~columns. The non-boundary dummy agents in the same
row have essentially the same preferences, which begin with the agents in the
next row from left to right. The preferences of the boundary dummy agents are
similar to those of the non-boundary dummy agents, except that they incorporate
the other boundary dummy agents in a~special manner. Meanwhile, the preferences
of the non-dummy agent~$(0, \alpha, t)$ reflect the preferences of
agent~$\alpha$ by starting with~$\Phat'_\alpha$.

\begin{figure}[t]
\begin{center}
\subfloat[The structure of the gadget.\label{fig:gadget-structure}]{
  \includegraphics[page=1,scale=.9]{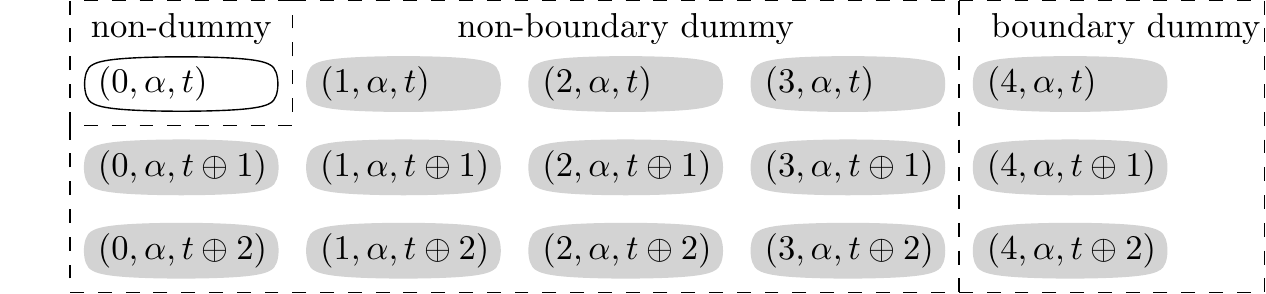}
}

\subfloat[The matching~$\muhat$ induced by~$\mu$ when~$\alpha$ is unmatched
in~$\mu$.\label{fig:gadget-singleton}]{
  \includegraphics[page=2,scale=.9]{gadget}
}

\subfloat[The matching~$\muhat$ induced by~$\mu$ when~$\alpha$ is matched
in~$\mu$.\label{fig:gadget-family}]{
  \includegraphics[page=3,scale=.9]{gadget}
}
\end{center}
\caption{Example of a~gadget corresponding to~$\alpha \in I \times \{t\}$ when
$k = 3$. An arrow indicates that the source agent is matched to the target
agent.}
\end{figure}

\subsection{Correctness of the Reduction}
Lemmas~\ref{lem:instability} and~\ref{lem:stability} below show that the
reduction in Section~\ref{sec:kdsmcyc-reduction} is a~correct reduction from
$\DSMICYC{k}$ to $\DSMCYC{k}$. The associated proofs are presented in
Sections~\ref{sec:kdsmcyc-instability} and~\ref{sec:kdsmcyc-stability}.

\begin{lemma} \label{lem:instability}
Let $k \geq 3$. Consider the reduction given in
Section~\ref{sec:kdsmcyc-reduction}. If the input $\DSMICYC{k}$ instance~$X$
has no weakly stable matching, then the output $\DSMCYC{k}$ instance~$\Xhat$
has no weakly stable matching.
\end{lemma}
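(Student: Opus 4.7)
The plan is to prove the contrapositive: given a weakly stable matching $\muhat$ of $\Xhat$, I construct a weakly stable matching $\mu$ of $X$. The backbone of the argument is a structural dichotomy for $\muhat$: every family of $\muhat$ is either a \emph{non-dummy family} of the form $((0, \alpha_0, 0), \dots, (0, \alpha_{k-1}, k-1))$ with each $\alpha_t \in I \times \{t\}$ and $\alpha_{t \oplus 1} \in P_{\alpha_t}$ for every $t$, or a \emph{gadget family} whose $k$ members all lie in $J \times \{\alpha\} \times \{0, \dots, k-1\}$ for a single $\alpha \in A$. In particular, for every non-dummy $(0, \alpha, t)$ that is matched by $\muhat$, the partner $\muhat((0, \alpha, t))$ sits in $\Phat'_\alpha$ (at the rank of some $\beta \in P_\alpha$) or in the column list $\{(j', \alpha, t \oplus 1) \mid j' \in J\}$, and never in the arbitrary tail of $\Phat_{(0, \alpha, t)}$.

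Given this dichotomy, I take $\mu := \{(\alpha_0, \dots, \alpha_{k-1}) \mid ((0, \alpha_0, 0), \dots, (0, \alpha_{k-1}, k-1)) \in \muhat\}$, the image of the non-dummy families of $\muhat$. Each tuple in $\mu$ is a valid family of $X$ by the $\alpha_{t \oplus 1} \in P_{\alpha_t}$ clause, and agent-disjointness of $\mu$ is inherited from agent-disjointness of $\muhat$. To verify weak stability, I suppose for contradiction that a family $(\beta_0, \dots, \beta_{k-1})$ strongly blocks $\mu$ in $X$ and argue that $\Fhat := ((0, \beta_0, 0), \dots, (0, \beta_{k-1}, k-1))$ strongly blocks $\muhat$. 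The blocking hypothesis forces $\beta_{t \oplus 1} \in P_{\beta_t}$, so $(0, \beta_{t \oplus 1}, t \oplus 1) \in \Phat'_{\beta_t}$. If $\beta_t$ is matched in $\mu$, then $\muhat((0, \beta_t, t)) = (0, \mu(\beta_t), t \oplus 1)$ also lies in $\Phat'_{\beta_t}$, ordered as in $P_{\beta_t}$, so the $X$-level preference of $\beta_{t \oplus 1}$ over $\mu(\beta_t)$ transfers directly. If $\beta_t$ is unmatched in $\mu$, the dichotomy places $(0, \beta_t, t)$ in a gadget family or leaves it unmatched in $\muhat$, so $\muhat((0, \beta_t, t))$ is either $(0, \beta_t, t)$ itself or a column-list entry; in either subcase, it is dominated in $\Phat_{(0, \beta_t, t)}$ by $(0, \beta_{t \oplus 1}, t \oplus 1) \in \Phat'_{\beta_t}$.

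The main obstacle is proving the structural dichotomy. The plan is to derive a strongly blocking $k$-cycle inside the ambient gadget whenever $\muhat$ contains a family that is neither of the two allowed types. The design of the gadget supports this: every non-boundary dummy $(j', \alpha, t')$ has the $(k-1)^2 + 1$ column-$\alpha$ agents of the next row as its top preferences, yielding the admirer chains of Remark~\ref{rem:admirer}, and the boundary-dummy layer confines cross-gadget matchings among dummies to pairs of boundary dummies. Consequently, in each failure mode I can walk around the $k$ rows of the affected gadget $\alpha$ to construct a blocking cycle $((0, \alpha, t), (j_1, \alpha, t \oplus 1), \dots, (j_{k-1}, \alpha, t \oplus (k-1)))$, choosing each $j_i$ greedily to avoid the at most one value that would coincide with an existing $\muhat$ partner and thereby nullify strict preference at that step. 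The slack of $(k-1)^2 + 1$ column positions comfortably accommodates these avoidance constraints, which is why the reduction uses this particular column count.
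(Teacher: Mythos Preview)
Your overall strategy matches the paper's: define $\mu$ as the set of tuples $(\alpha_0,\dots,\alpha_{k-1})$ whose non-dummy lift lies in $\muhat$, and show that a blocking family of $\mu$ lifts to one of $\muhat$. The paper packages the gadget analysis as Lemmas~\ref{lem:admirer}, \ref{lem:gadget}, and~\ref{lem:correspondence}; together these yield exactly the restricted form of your dichotomy that you actually use, namely for $\muhat$-families that contain a non-dummy agent.

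There are, however, two real gaps in your sketch. First, the dichotomy as you state it is false: the weakly stable $\muhat$ built in Section~\ref{sec:kdsmcyc-stability} contains families $(R_0[s],\dots,R_{k-1}[s])$ of boundary dummies drawn from \emph{different} gadgets, so not every family of a weakly stable $\muhat$ is of one of your two types. You only invoke the dichotomy for the family through a non-dummy $(0,\beta_t,t)$, and that restricted claim is true, but it should be stated that way. Second, and more seriously, your blocking-cycle construction does not work as described. Avoiding ``the at most one value that would coincide with an existing $\muhat$ partner'' ensures only $(j_{i+1},\alpha,t\oplus(i{+}1)) \neq \muhat(j_i,\alpha,t\oplus i)$, not that $(j_i,\alpha,t\oplus i)$ \emph{strictly prefers} it; if $\muhat(j_i,\alpha,t\oplus i) = (0,\alpha,t\oplus(i{+}1))$ is that agent's top choice, no $j_{i+1}$ works at all. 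The paper's Lemma~\ref{lem:admirer} instead runs a pigeonhole over shrinking prefix lists $\Ahat_s$ of lengths $j+k-s$, selecting in each row an agent whose $\muhat$-partner falls outside the next (shorter) prefix, and these agents form the blocking family. A related problem is that your cycle is anchored at $(0,\alpha,t)$, but in the failure mode where $\muhat(0,\alpha,t)\in\Phat'_\alpha$ yet the $\muhat$-family is not all-non-dummy, $(0,\alpha,t)$ may already be matched to a $\Phat'_\alpha$-entry it prefers to every column agent, so no in-gadget cycle through $(0,\alpha,t)$ can block. The paper handles this (Case~2 in the proof of Lemma~\ref{lem:correspondence}) by walking along the $\muhat$-family to a later non-dummy $(0,\alpha_{t^*},t^*)$ with $\muhat(0,\alpha_{t^*},t^*)\notin\Phat'_{\alpha_{t^*}}$ and applying Lemma~\ref{lem:gadget} in that agent's gadget instead.
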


\begin{lemma} \label{lem:stability}
Let $k \geq 3$. Consider the reduction in Section~\ref{sec:kdsmcyc-reduction}.
If the input $\DSMICYC{k}$ instance~$X$ has a weakly stable matching, then the
output $\DSMCYC{k}$ instance~$\Xhat$ has a weakly stable matching.
\end{lemma}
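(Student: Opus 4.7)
The plan is to construct a weakly stable matching~$\muhat$ of~$\Xhat$ from the given weakly stable matching~$\mu$ of~$X$, and then verify weak stability by a structural case analysis on any purported strongly blocking family. The construction, visualized in Figures~\ref{fig:gadget-singleton} and~\ref{fig:gadget-family}, proceeds in three parts. First, for each family $(\alpha_0, \dots, \alpha_{k-1}) \in \mu$, I include the main family $((0, \alpha_0, 0), \dots, (0, \alpha_{k-1}, k-1))$ in~$\muhat$. Second, for each gadget corresponding to $\alpha \in I \times \{t\}$ that is unmatched in~$\mu$, I include the $(k-1)^2 + 1$ column families $((j, \alpha, 0), \dots, (j, \alpha, k-1))$ for $j \in J$, absorbing the non-dummy into the $j=0$ family; for each gadget of a matched~$\alpha$, whose non-dummy is already placed in a main family, I include instead the $(k-1)^2$ shifted in-gadget families that place $(i, \alpha, t)$ at row~$t$ and $(i - 1, \alpha, t')$ at every other row~$t'$, for $i \in \{1, \dots, (k-1)^2\}$, leaving the boundary dummies of rows $t' \neq t$ as \emph{extras}. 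Third, I pair the extras across gadgets by a diagonal rule: for each type~$t$, list the extras in lexicographic order of the associated~$\alpha$ and match the $q$-th extra of type~$t$ to the $q$-th extra of type $t \oplus 1$.

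To verify weak stability, I suppose toward contradiction that $\Fhat = (\alphahat_0, \dots, \alphahat_{k-1})$ strongly blocks~$\muhat$. If every $\alphahat_t = (0, \beta_t, t)$ is a non-dummy, then since $\Phat_{(0, \beta_t, t)}$ begins with $\Phat'_{\beta_t}$ and the partner $\muhat(\alphahat_t)$ is either $(0, \mu(\beta_t), t \oplus 1) \in \Phat'_{\beta_t}$ (matched case) or the top in-gadget entry $(0, \beta_t, t \oplus 1)$ (unmatched case), the strict-preference conditions for $\Fhat$ reduce exactly to those for $(\beta_0, \dots, \beta_{k-1})$ strongly blocking~$\mu$ in~$X$, contradicting the weak stability of~$\mu$. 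If all $\alphahat_t$ lie in one gadget, then the lex-increasing in-gadget preferences of non-boundary dummies and the ``in-gadget top'' preference of the row-$t$ boundary dummy together force a strictly decreasing cyclic sequence of column indices, which is impossible; and the non-dummy, if present, cannot strictly prefer any other in-gadget candidate over its partner because of the placement of $\Phat'_\alpha$ at the head of its list.

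If $\Fhat$ spans multiple gadgets and some $\alphahat_t$ is a non-dummy or a non-boundary dummy, then $\muhat(\alphahat_t)$ lies either in $\Phat'_{\beta_t}$ or within the same gadget as~$\alphahat_t$, whereas $\alphahat_{t \oplus 1}$ lies in a different gadget and hence in the ``remaining arbitrary'' tail of $\Phat_{\alphahat_t}$, so strict preference must fail, a contradiction. The remaining possibility is that $\Fhat$ spans multiple gadgets and every $\alphahat_t$ is a boundary dummy. Any $\alphahat_t = ((k-1)^2, \beta_t, t)$ with $\beta_t \in I \times \{t\}$ matched in~$\mu$ is matched in-gadget to the non-boundary $((k-1)^2 - 1, \beta_t, t \oplus 1)$ of rank $(k-1)^2 - 1$, which outranks every cross-gadget boundary; so such an~$\alphahat_t$ cannot strictly prefer its next family member, forcing each $\beta_t$ to be either unmatched in~$\mu$ or to yield an extra. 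In both sub-cases the $\muhat$-partner of $\alphahat_t$ lies in the boundary section of $\Phat_{\alphahat_t}$, and the strict preference for each next cross-gadget boundary becomes a cyclic inequality in the common lex order of boundary dummies; the diagonal pairing translates this into a strictly decreasing cyclic sequence of positions, which is impossible.

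The main obstacle will be handling the cross-gadget boundary-only case: the sub-instance induced by the extras under the diagonal matching is effectively the common-preferences variant of $\DSMCYC{k}$, stabilized by this diagonal construction via the same cyclic-decrease argument used for the single-gadget case. The remaining cross-gadget configurations are dispatched cheaply by the ``in-gadget first'' design of dummy preferences, which confines any out-of-gadget alternative to the preference tail and prevents strict preference.
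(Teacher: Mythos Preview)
Your construction of $\muhat$ is not weakly stable in general, so the proof fails at the outset rather than in the case analysis. The difference from the paper is in how you handle the boundary dummies of \emph{unmatched} gadgets: you put them in the in-gadget column family $((\,(k-1)^2,\alpha,0),\dots,((k-1)^2,\alpha,k-1))$, while the paper throws them into the same diagonal as the extras (the lists $R_t$). With your choice, the extras-only diagonal can match an extra $((k-1)^2,\beta,t)$ to a partner $((k-1)^2,\gamma,t\oplus1)$ with $\gamma>_{\text{lex}}\beta$, and an unmatched-gadget boundary can then be inserted to create a strongly blocking cycle of boundary dummies.

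Here is a concrete counterexample for $k=3$. Let $I=\{0,1,2,3\}$ and take the $\DSMICYC{3}$ instance with $P_{(2,0)}=\List{(0,1)}$, $P_{(0,1)}=\List{(3,2)}$, $P_{(3,2)}=\List{(2,0)}$, and all other preference lists empty; then $\mu=\{((2,0),(0,1),(3,2))\}$ is weakly stable. In your $\muhat$ the type-$0$ extras are $(0,1),(3,2)$, the type-$1$ extras are $(2,0),(3,2)$, and the type-$2$ extras are $(0,1),(2,0)$ (each list sorted by the lex order of the associated $\alpha$), so the position-$0$ diagonal family is $\bigl((4,(0,1),0),(4,(2,0),1),(4,(0,1),2)\bigr)$. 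Meanwhile $(1,0)$ and $(0,2)$ are unmatched, so $(4,(1,0),1)$ is matched to $(4,(1,0),2)$ and $(4,(0,2),2)$ is matched to $(4,(0,2),0)$. Now the family
\[
\Fhat=\bigl((4,(0,1),0),\,(4,(1,0),1),\,(4,(0,2),2)\bigr)
\]
is strongly blocking: each agent is a boundary dummy whose partner lies in the boundary section of its preference list, and we have $(1,0)<_{\text{lex}}(2,0)$, $(0,2)<_{\text{lex}}(1,0)$, and $(0,1)<_{\text{lex}}(0,2)$. Thus your ``strictly decreasing cyclic sequence of positions'' claim in Case~D is false for this construction; the diagonal on extras alone does not control the lex rank of the partner relative to the source, and unmatched-gadget boundaries can exploit the gap. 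The paper avoids this by placing \emph{all} boundary dummies $((k-1)^2,\alpha,t)$ with $\delta_t(\alpha)=0$ (unmatched or non-row-$t$) into a single lex-sorted list $R_t$ and matching $R_t[s]$ to $R_{t\oplus1}[s]$, which makes the position argument go through uniformly.

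A secondary issue: your Case~C argument asserts that $\alphahat_{t\oplus1}$ lies in a different gadget merely because $\Fhat$ spans multiple gadgets and $\alphahat_t$ is a non-dummy or non-boundary dummy. That does not follow; the crossing could occur at some other index. The correct statement is that a non-dummy strictly prefers only agents in $\Phat'_{\beta_t}$ (hence only other non-dummies), and a non-boundary dummy strictly prefers only in-gadget agents of smaller column index; from this one deduces that a blocking family containing a non-dummy consists entirely of non-dummies, and one containing a non-boundary dummy but no non-dummy lies entirely in one gadget. This gap is fixable, but the construction error above is not, short of adopting the paper's single-diagonal treatment of all $\delta_t(\alpha)=0$ boundaries.
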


\begin{theorem}
Let $k \geq 3$. Then there exists a~$\DSMCYC{k}$ instance that has no weakly
stable matching.
\end{theorem}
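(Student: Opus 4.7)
The plan is to derive this theorem as an immediate corollary of two results already established earlier in the paper, namely Theorem~\ref{thm:example} (which produces a $\DSMICYC{k}$ instance with no weakly stable matching for every $k \geq 3$) and Lemma~\ref{lem:instability} (which guarantees that the reduction in Section~\ref{sec:kdsmcyc-reduction} turns any $\DSMICYC{k}$ instance without a weakly stable matching into a $\DSMCYC{k}$ instance without a weakly stable matching).

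Concretely, I would fix $k \geq 3$ and proceed in three short steps. First, invoke Theorem~\ref{thm:example} to obtain a $\DSMICYC{k}$ instance~$X$ that admits no weakly stable matching. Second, apply the reduction of Section~\ref{sec:kdsmcyc-reduction} to~$X$ to produce the $\DSMCYC{k}$ instance~$\Xhat$; the construction is purely syntactic and yields a valid $\DSMCYC{k}$ instance (complete preferences, equal numbers of agents per type) by inspection of the definition of~$\Phat_{\alphahat}$. Third, apply Lemma~\ref{lem:instability} to conclude that~$\Xhat$ admits no weakly stable matching, which is exactly what the theorem asserts.

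Because the work has been front-loaded into the two cited results, there is essentially no obstacle at the level of this theorem itself: no new construction is needed, and no case analysis on~$k$ is required since Theorem~\ref{thm:example} already handles all $k \geq 3$ uniformly (with the base case $k = 3$ supplied by Bir\'o and McDermid). The only subtlety worth pointing out is directional: Lemma~\ref{lem:instability} is the one-way implication that suffices here, so I would not need to invoke Lemma~\ref{lem:stability} at all. The genuine technical content, of course, lives in the deferred proof of Lemma~\ref{lem:instability} in Section~\ref{sec:kdsmcyc-instability}; from the vantage point of the present statement, the argument is a single-line composition of existing ingredients.
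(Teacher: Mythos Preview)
Your proposal is correct and matches the paper's own proof essentially line for line: the paper too simply invokes Theorem~\ref{thm:example} to obtain a $\DSMICYC{k}$ instance~$X$ with no weakly stable matching and then applies Lemma~\ref{lem:instability} to conclude that the reduction of Section~\ref{sec:kdsmcyc-reduction} yields a $\DSMCYC{k}$ instance~$\Xhat$ with no weakly stable matching. Your observation that Lemma~\ref{lem:stability} is not needed here is accurate and consistent with the paper's argument.
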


\begin{proof}
By Theorem~\ref{thm:example}, there exists a~$\DSMICYC{k}$ instance~$X$ that
has no weakly stable matching. Then Lemma~\ref{lem:instability} implies that
given~$X$ as an input, the reduction in Section~\ref{sec:kdsmcyc-reduction}
produces a~$\DSMCYC{k}$ instance~$\Xhat$ that has no weakly stable matching.
\end{proof}

\begin{theorem}
Let $k \geq 3$. Then it is NP-complete to determine whether a~$\DSMCYC{k}$
instance has a~weakly stable matching.
\end{theorem}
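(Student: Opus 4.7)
The plan is to combine the two lemmas stated in this section with the membership result already established in Section~\ref{sec:preliminaries}, so this final theorem is essentially a bookkeeping step. First, I would argue membership in NP. Since every $\DSMCYC{k}$ instance is in particular a~$\DSMICYC{k}$ instance, Theorem~\ref{thm:certificate} gives a~$\poly(n, k)$-time algorithm for checking weak stability of a~candidate matching, so a~weakly stable matching itself serves as a~polynomial-size certificate.

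For NP-hardness, I would reduce from $\DSMICYC{k}$, whose NP-hardness is provided by Theorem~\ref{thm:hardness}. The reduction is exactly the one constructed in Section~\ref{sec:kdsmcyc-reduction}, and its correctness is obtained by combining both directions: Lemma~\ref{lem:instability} shows that if~$X$ has no weakly stable matching then neither does~$\Xhat$, and Lemma~\ref{lem:stability} shows the converse. Together they give that~$X$ has a~weakly stable matching if and only if~$\Xhat$ does.

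The remaining step is to verify that the reduction is polynomial-time computable. Since $\abs{J} = (k-1)^2 + 1$, the agent set $\Ahat = J \times A \times \{0, \dots, k-1\}$ has size $\poly(n, k)$, where~$n = \abs{I}$. Each preference list~$\Phat_{\alphahat}$ is built by scanning the corresponding list~$P_\alpha$ (of length at most~$n$) and enumerating the tuples of $J \times A \times \{t \oplus 1\}$ in a~prescribed order; the total work across all agents is therefore $\poly(n, k)$. Combining polynomial-time reducibility with the NP-hardness of $\DSMICYC{k}$ and membership in NP gives NP-completeness of $\DSMCYC{k}$.

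I do not expect any real obstacles in writing out this argument: the genuine difficulty is already isolated in Lemmas~\ref{lem:instability} and~\ref{lem:stability}, particularly in exploiting the chains of admirers inside each gadget to force the non-dummy agent's partner to mirror the structure of a~matching in the original $\DSMICYC{k}$ instance. The only point that warrants explicit mention here is that, because~$k$ is part of the input in the statement, one should state time bounds as $\poly(n, k)$ rather than treating~$k$ as a~constant.
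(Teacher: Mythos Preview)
Your proposal is correct and matches the paper's own proof essentially line for line: membership in NP via Theorem~\ref{thm:certificate}, NP-hardness by reducing from $\DSMICYC{k}$ using Theorem~\ref{thm:hardness}, correctness of the reduction via Lemmas~\ref{lem:instability} and~\ref{lem:stability}, and a $\poly(n,k)$ bound on the running time of the construction in Section~\ref{sec:kdsmcyc-reduction}. The only minor remark is that the theorem is naturally read as a statement for each fixed $k\geq 3$, so quoting a $\poly(n,k)$ bound is more than enough; your closing caveat about treating $k$ as part of the input is harmless but not strictly needed.
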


\begin{proof}
By Theorem~\ref{thm:hardness}, it is NP-complete to determine whether
a~$\DSMICYC{k}$ instance has a~weakly stable matching.
Lemmas~\ref{lem:instability} and~\ref{lem:stability} imply the correctness of
the reduction from $\DSMICYC{k}$ to $\DSMCYC{k}$ presented in
Section~\ref{sec:kdsmcyc-reduction}. Moreover, the reduction can be implemented
in $\poly(n, k)$ time, where~$n$ is the number of agents of each type.
Theorem~\ref{thm:certificate} implies that the problem of determining whether
a~$\DSMCYC{k}$ instance has a~weakly stable matching is in NP. Thus the problem
of determining whether a~$\DSMCYC{k}$ instance has a~weakly stable matching is
NP-complete.
\end{proof}

\subsection{Properties of the Gadget}
\label{sec:kdsmcyc-gadget}

In this subsection, we study the properties of the gadget in the scenario that
the non-dummy agent is not matched to a~non-dummy agent corresponding to
an acceptable partner. In Lemma~\ref{lem:admirer}, we show that in this
scenario, many agents in the gadget are matched to agents in the same gadget.
In Lemma~\ref{lem:gadget}, we apply Lemma~\ref{lem:admirer} inductively to
show that in the same scenario, every agent in the same family as the non-dummy
agent belongs to the same gadget.

\begin{lemma} \label{lem:admirer}
Let $\muhat$ be a~weakly stable matching in~$\Xhat$. Let $t^* \in
\{0, \dots, k-1\}$ and $\alpha^* \in I \times \{t^*\}$ such that
$\muhat(0, \alpha^*, t^*)$ is not in~$\Phat'_{\alpha^*}$. Let
$t \in \{0, \dots, k-1\}$ and $j \in J$ such that $j \leq (k-1) \cdot (k-2)$.
Then $\muhat(j, \alpha^*, t) \in \{ 0, \dots, j + k - 1\} \times \{\alpha^*\}
\times \{t \oplus 1\}$.
\end{lemma}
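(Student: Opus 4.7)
The plan is to argue by contradiction: assume some $(j^\dagger, t^\dagger)$ with $j^\dagger \leq (k-1)(k-2)$ violates the conclusion, and produce from this violation a strongly blocking family that lies entirely inside the gadget for $\alpha^*$, thereby contradicting the weak stability of $\muhat$.

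First I would set up a uniform notion of ``admiration threshold''. For every column $j \in J$ and type $t$, the agent $(j, \alpha^*, t)$ carries at the top of its preference list a block of the in-gadget agents $(0, \alpha^*, t \oplus 1), (1, \alpha^*, t \oplus 1), \ldots$ in increasing column order: this holds for non-boundary dummies directly, for the non-dummy $(0, \alpha^*, t^*)$ thanks to the hypothesis $\muhat(0, \alpha^*, t^*) \notin \Phat'_{\alpha^*}$ (which forces the partner to lie strictly after $\Phat'_{\alpha^*}$, hence inside or after the sorted in-gadget block), and for the boundary dummy at column $(k-1)^2$ with a minor adjustment near the end of the block. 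I would then define $g_t(j)$ so that $(j, \alpha^*, t)$ prefers $(j'', \alpha^*, t \oplus 1)$ to $\muhat(j, \alpha^*, t)$ exactly when $j'' < g_t(j)$; concretely, $g_t(j)$ is the partner's column if that partner lies in the gadget for $\alpha^*$ and equals $(k-1)^2 + 1$ otherwise. The hypothesized violation translates into the clean statement $g_{t^\dagger}(j^\dagger) \geq j^\dagger + k$.

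A strongly blocking family of the form $((c_0, \alpha^*, 0), \ldots, (c_{k-1}, \alpha^*, k-1))$ corresponds to a closed walk of length $k$ in the directed graph on $J \times \{0, \ldots, k-1\}$ with edges $(j, t) \to (j', t \oplus 1)$ whenever $j' < g_t(j)$, so I would exhibit such a walk starting and ending at $(j^\dagger, t^\dagger)$ by a reachability argument. Let $R_s$ be the set of columns at type $t^\dagger \oplus s$ reachable in $s$ steps from $(j^\dagger, t^\dagger)$. Each out-neighborhood is a downward-closed interval $\{0, \ldots, g_t(j) - 1\}$, so $R_s = \{0, 1, \ldots, M_s - 1\}$ with $M_1 = g_{t^\dagger}(j^\dagger)$ and $M_{s+1} = \max_{j < M_s} g_{t^\dagger \oplus s}(j)$. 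The key inequality $M_{s+1} \geq M_s - 1$ comes from a pigeonhole argument: the $M_s$ agents $(0, \alpha^*, t^\dagger \oplus s), \ldots, (M_s - 1, \alpha^*, t^\dagger \oplus s)$ have pairwise distinct $\muhat$-partners, so either all those partners lie in the gadget for $\alpha^*$, in which case their $M_s$ distinct columns in $\{0, \ldots, (k-1)^2\}$ have maximum at least $M_s - 1$, or at least one partner lies outside that gadget and already contributes $g = (k-1)^2 + 1 \geq M_s - 1$ to the maximum. Iterating this decrement $k - 1$ times starting from $M_1 \geq j^\dagger + k$ yields $M_k \geq j^\dagger + 1$, so $j^\dagger \in R_k$, and the corresponding closed walk gives the desired strongly blocking family.

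The main obstacle I anticipate is the careful bookkeeping of the admiration threshold across all three kinds of agents in the gadget (non-dummy, non-boundary dummy, boundary dummy), together with checking that the bound $j^\dagger \leq (k-1)(k-2)$ is precisely what guarantees $j^\dagger + k - 1 \leq (k-1)^2$, so that the inequality $g_{t^\dagger}(j^\dagger) \geq j^\dagger + k$ is meaningful within $J$ and the pigeonhole step never ``runs off the end'' of the gadget before the $k$-cycle can close. Once that accounting is in place, the geometric decrement $M_{s+1} \geq M_s - 1$ propagates for exactly the right number of steps.
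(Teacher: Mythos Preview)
Your proposal is correct and takes essentially the same approach as the paper: both arguments use the pigeonhole principle on distinct $\muhat$-partners within shrinking column-prefixes of the gadget to produce a strongly blocking family entirely inside $J \times \{\alpha^*\} \times \{0,\dots,k-1\}$. The paper packages this more directly---it fixes the decreasing prefixes $\Ahat_s$ of lengths $j+k-s$ up front and picks one escaping agent $\alphahat_s$ per level---whereas you phrase the same pigeonhole step as the decrement bound $M_{s+1}\ge M_s-1$ on forward reachability sets; the resulting blocking cycle is the same.
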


\begin{proof}
Let $\Ahat_s = \List{(j', \alpha', t') \in J \times \{\alpha^*\} \times
\{t \oplus s \oplus 1\} \mid j' \leq j + k - s - 1}$ for every
$s \in \{0, \dots, k-1\}$. For the sake of contradiction, suppose
$\muhat(j, \alpha^*, t)$ is not in $\Ahat_0$.

For every $s \in \{0, \dots, k-2\}$, since the length of $\Ahat_s$ is greater
than the length of $\Ahat_{s+1}$, there exists~$\alphahat_s$ in~$\Ahat_s$ such
that $\muhat(\alphahat_s)$ is not in $\Ahat_{s+1}$. Let
$\alphahat_{k-1} = (j, \alpha^*, t)$. Then $\alphahat_{k-1}$ is
in~$\Ahat_{k-1}$ and~$\muhat(\alphahat_{k-1})$ is not in~$\Ahat_0$.
Since~$\muhat$ is a~weakly stable matching of~$\Xhat$, the family
$(\alphahat_{k - t - 1}, \dots, \alphahat_{(k - t - 1) \oplus (k-1)})$ is not
strongly blocking. So there exists $s^* \in \{0, \dots, k-1\}$ such
that~$\alphahat_{s^*}$ does not prefer~$\alphahat_{s^* \oplus 1}$ to
$\mu(\alphahat_{s^*})$. Since~$\alphahat_{s^*}$ is in~$\Ahat_{s^*}$, there
exists $j^* \leq j + k - s^* - 1$ such that
$\alphahat_{s^*} = (j^*, \alpha^*, t \oplus s^* \oplus 1)$. We consider two
cases.

Case~1: $j^* = 0$ and $t \oplus s^* \oplus 1 = t^*$.
Then $\alphahat_{s^*} = (0, \alpha^*, t^*)$ is a non-dummy agent and
$\Phat'_{\alpha^*} \cdot \Ahat_{s^* \oplus 1}$ is a prefix of the
preference list~$\Phat_{\alphahat_{s^*}}$. Since $\mu(\alphahat_{s^*})$ is not
in $\Phat'_{\alpha^*} \cdot \Ahat_{s^* \oplus 1}$ and
$\alphahat_{s^* \oplus 1}$ is in~$\Ahat_{s^* \oplus 1}$, agent
$\alphahat_{s^*}$ prefers~$\alphahat_{s^* \oplus 1}$ to $\mu(\alphahat_{s^*})$,
a contradiction.

Case~2: $j^* \neq 0$ or $t \oplus s^* \oplus 1 \neq t^*$. We consider two
subcases.

Case~2.1: $j^* = (k-1)^2$.
Since $(k-1)^2 = j^* \leq j + k - s^* - 1 \leq (k-1)^2 - s^*$, we have
$s^* = 0$. Hence $\alphahat_0 = ((k-1)^2, \alpha^*, t \oplus 1)$ is a
boundary dummy agent and~$\Ahat_1$ is a prefix of the preference
list~$\Phat_{\alphahat_0}$. Since $\mu(\alphahat_0)$ is not
in~$\Ahat_1$ and $\alphahat_1$ is in~$\Ahat_1$, agent
$\alphahat_0$ prefers~$\alphahat_1$ to $\mu(\alphahat_{0})$,
a contradiction.

Case~2.2: $j^* < (k-1)^2$.
Then~$\alphahat_{s^*}$ is a non-boundary dummy agent
and~$\Ahat_{s^* \oplus 1}$ is a prefix of the preference
list~$\Phat_{\alphahat_{s^*}}$. Since $\mu(\alphahat_{s^*})$ is not
in~$\Ahat_{s^* \oplus 1}$ and $\alphahat_{s^* \oplus 1}$ is
in~$\Ahat_{s^* \oplus 1}$, agent $\alphahat_{s^*}$
prefers~$\alphahat_{s^* \oplus 1}$ to $\mu(\alphahat_{s^*})$, a contradiction.
\end{proof}

\begin{remark}
\label{rem:admirer}
In the proof of Lemma~\ref{lem:admirer}, we can think of $\alphahat_0, \dots,
\alphahat_{k-1}$ as a chain of admirers in the gadget corresponding
to~$\alpha$, where~$\alphahat_s$ prefers~$\alphahat_{s + 1}$ to
$\muhat(\alphahat_s)$. By applying the weak stability condition to this chain of
admirers, we show that~$\alphahat_{k-1}$ is matched to a~partner no worse than
$\alphahat_0$.
\end{remark}

\begin{lemma} \label{lem:gadget}
Let~$\muhat$ be a~weakly stable matching in~$\Xhat$. Let
$j_0, \dots, j_{k-1} \in J$ and $\alpha_0, \dots, \alpha_{k-1} \in A$
such that $( (j_0, \alpha_0, 0), \dots, (j_{k-1}, \alpha_{k-1}, k-1) )
\in \muhat$. Let $t^* \in \{0, \dots, k-1\}$ such that $j_{t^*} = 0$
and $\alpha_{t^*} \in I \times \{t^*\}$. Suppose that
$(j_{t^* \oplus 1}, \alpha_{t^* \oplus 1}, t^* \oplus 1)$
is not in~$\Phat'_{\alpha_{t^*}}$. Then, for every $s \in \{0, \dots, k-1\}$,
we have~$\alpha_{t^* \oplus s} = \alpha_{t^*}$ and
$j_{t^* \oplus s} \leq (k-1) \cdot s$.
\end{lemma}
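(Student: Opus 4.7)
The plan is to prove the lemma by induction on $s \in \{0, \dots, k-1\}$, with Lemma~\ref{lem:admirer} supplying the inductive step. Roughly, the chain of families $(j_{t^* \oplus s}, \alpha_{t^* \oplus s}, t^* \oplus s)$ starts at the non-dummy agent $(0, \alpha_{t^*}, t^*)$, whose partner is assumed to lie outside $\Phat'_{\alpha_{t^*}}$; this is exactly the hypothesis Lemma~\ref{lem:admirer} needs in order to conclude that partners stay inside the gadget corresponding to $\alpha_{t^*}$ with modestly growing $j$-indices.

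For the base case $s = 0$, the claim reduces to $\alpha_{t^*} = \alpha_{t^*}$ and $j_{t^*} = 0 \leq 0$, both immediate from the hypotheses of the lemma. For the inductive step, suppose the statement holds for some $s$ with $0 \leq s \leq k-2$, so that $\alpha_{t^* \oplus s} = \alpha_{t^*}$ and $j_{t^* \oplus s} \leq (k-1) s \leq (k-1)(k-2)$. Then the agent $(j_{t^* \oplus s}, \alpha_{t^*}, t^* \oplus s)$ satisfies the size condition required to invoke Lemma~\ref{lem:admirer} with $\alpha^* = \alpha_{t^*}$, the given $t^*$, and $t = t^* \oplus s$. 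The hypothesis of Lemma~\ref{lem:admirer} is that $\muhat(0, \alpha_{t^*}, t^*)$ is not in $\Phat'_{\alpha_{t^*}}$, which holds because $\muhat(0, \alpha_{t^*}, t^*) = (j_{t^* \oplus 1}, \alpha_{t^* \oplus 1}, t^* \oplus 1)$ by the matched family, and this tuple is not in $\Phat'_{\alpha_{t^*}}$ by assumption. Therefore Lemma~\ref{lem:admirer} yields
\begin{equation*}
\muhat(j_{t^* \oplus s}, \alpha_{t^*}, t^* \oplus s)
\in \{0, \dots, j_{t^* \oplus s} + k - 1\} \times \{\alpha_{t^*}\}
\times \{t^* \oplus s \oplus 1\}.
\end{equation*}
Since the left-hand side equals $(j_{t^* \oplus (s+1)}, \alpha_{t^* \oplus (s+1)}, t^* \oplus (s+1))$ by the matched family, we obtain $\alpha_{t^* \oplus (s+1)} = \alpha_{t^*}$ and $j_{t^* \oplus (s+1)} \leq j_{t^* \oplus s} + k - 1 \leq (k-1)s + (k-1) = (k-1)(s+1)$, completing the induction.

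The range of the induction is precisely calibrated: the inductive step requires $j_{t^* \oplus s} \leq (k-1)(k-2)$ to invoke Lemma~\ref{lem:admirer}, and the induction hypothesis $j_{t^* \oplus s} \leq (k-1)s$ guarantees this exactly when $s \leq k-2$. Hence $k-1$ applications of the step suffice to cover all $s \in \{1, \dots, k-1\}$, which together with the base case yields the full statement.

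I do not anticipate any substantive obstacle; the main care is in correctly identifying the parameters $(\alpha^*, t^*, t, j)$ in each invocation of Lemma~\ref{lem:admirer} and in tracking the arithmetic showing that $(k-1)s \leq (k-1)(k-2)$ for the range of $s$ where the inductive step is needed.
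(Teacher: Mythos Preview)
Your proposal is correct and follows essentially the same approach as the paper: induction on $s$ with Lemma~\ref{lem:admirer} supplying the inductive step, verifying its hypothesis via $\muhat(0,\alpha_{t^*},t^*)=(j_{t^*\oplus 1},\alpha_{t^*\oplus 1},t^*\oplus 1)\notin\Phat'_{\alpha_{t^*}}$ and tracking the bound $j_{t^*\oplus s}\leq (k-1)s$ to ensure applicability. The only difference is cosmetic---the paper steps from $s-1$ to $s$ whereas you step from $s$ to $s+1$.
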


\begin{proof}
We prove the claim by induction on~$s$. When~$s = 0$, we have
$\alpha_{t^* \oplus s} = \alpha_{t^* \oplus 0} = \alpha_{t^*}$ and
$j_{t^* \oplus s} = j_{t^*} = 0 \leq (k-1) \cdot s$.

Suppose $\alpha_{t^* \oplus (s - 1)} = \alpha_{t^*}$ and
$j_{t^* \oplus (s - 1)} \leq (k-1) \cdot (s - 1)$, where
$s \in \{1, \dots, k-1\}$. Since
$(j_{t^* \oplus 1}, \alpha_{t^* \oplus 1}, t^* \oplus 1)$
is not in~$\Phat'_{\alpha_{t^*}}$, agent~$\muhat(0, \alpha_{t^*}, t^*)$
is not in~$\Phat'_{\alpha_{t^*}}$. Let $t = t^* \oplus (s - 1)$. Then
$\alpha_t = \alpha_{t^* \oplus (s - 1)} = \alpha_{t^*}$ and
$j_t = j_{t^* \oplus (s - 1)} \leq (k-1) \cdot (s - 1) \leq (k-1) \cdot (k-2)$.
So Lemma~\ref{lem:admirer} implies that
$\muhat(j_t, \alpha_{t^*}, t)
\in \{0, \dots, j_t + k - 1\} \times \{\alpha_{t^*}\} \times \{t \oplus 1\}$.
Hence $j_{t \oplus 1} \leq j_t + k - 1$ and
$\alpha_{t \oplus 1} = \alpha_{t^*}$, since~$\muhat(j_t, \alpha_{t^*}, t)
= \muhat(j_t, \alpha_t, t)
= (j_{t \oplus 1}, \alpha_{t \oplus 1}, t \oplus 1)$. Thus
$\alpha_{t^* \oplus s} = \alpha_{t \oplus 1} = \alpha_{t^*}$ and
$j_{t^* \oplus s} = j_{t \oplus 1} \leq j_t + k - 1
= j_{t^* \oplus (s-1)} + k - 1
\leq (k-1) \cdot (s-1) + k - 1 = (k-1) \cdot s$.
\end{proof}

\subsection{Proof of Lemma~\ref{lem:instability}}
\label{sec:kdsmcyc-instability}

The goal of this subsection is to prove Lemma~\ref{lem:instability}. It suffices
to show that every weakly stable matching~$\muhat$ in~$\Xhat$ induces a weakly
stable matching~$\mu$ in~$X$.

Recall that each agent in~$A$ has a~corresponding non-dummy agent in~$\Ahat$,
and that a~family in~$X$ is a~tuple of $k$~agents in~$A$ such that each agent
appears in the preference list of another. Hence we include in~$\mu$ a~family
of agents in~$X$ whenever the corresponding family of non-dummy agents are
matched in~$\muhat$. More formally, we define the matching~$\mu$ in~$X$ induced
by~$\muhat$ in~$\Xhat$ as the set of families
$(\alpha_0, \dots, \alpha_{k-1})$ in~$X$ satisfying
  $((0, \alpha_0, 0), \dots, (0, \alpha_{k-1}, k-1)) \in \muhat$.
Notice that every~$\mu$ induced by a~matching~$\muhat$ in~$\Xhat$ is a~valid
matching in~$X$ since agent-disjoint families in~$\Xhat$ induce agent-disjoint
families in~$X$.

Lemma~\ref{lem:correspondence} below shows that if $\muhat$ is weakly stable and
matches a~non-dummy agent to a~non-dummy agent corresponding to an
acceptable partner, then~$\mu$ matches the corresponding agents. Our proof
relies on Lemma~\ref{lem:gadget} and the weak stability of~$\muhat$. Notice
that if~$\muhat$ is not weakly stable, it may be the case that~$\muhat$ matches
a~family consisting of $k-1$~non-dummy agents and one dummy agent. In such
a~case, the corresponding $k-1$ agents are unmatched in the induced
matching~$\mu$.

\begin{lemma} \label{lem:correspondence}
Let~$\mu$ be the matching in~$X$ induced by a~weakly stable matching~$\muhat$
in~$\Xhat$. Let $t \in \{0, \dots, k-1\}$ and $\alpha \in I \times \{t\}$
such that $\muhat(0, \alpha, t)$ is in~$\Phat'_{\alpha}$. Then
$\muhat(0, \alpha, t) = (0, \mu(\alpha), t \oplus 1)$.
\end{lemma}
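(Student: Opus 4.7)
The plan is to show that the $\muhat$-family containing $(0,\alpha,t)$ consists entirely of non-dummy agents; the conclusion then follows immediately from the definition of the induced matching~$\mu$. The hypothesis $\muhat(0,\alpha,t) \in \Phat'_\alpha$ means, by construction of $\Phat'_\alpha$, that $\muhat(0,\alpha,t) = (0,\beta,t \oplus 1)$ for some $\beta \in P_\alpha$; note that $\beta \in I \times \{t \oplus 1\}$. Write the $\muhat$-family containing $(0,\alpha,t)$ as $((j_0,\alpha_0,0),\ldots,(j_{k-1},\alpha_{k-1},k-1))$, so $(j_t,\alpha_t) = (0,\alpha)$ and $(j_{t \oplus 1},\alpha_{t \oplus 1}) = (0,\beta)$.

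I would then prove by induction on $r \in \{0,\ldots,k-1\}$ that position $s := t \oplus r$ carries a non-dummy agent, i.e.\ $j_s = 0$ and $\alpha_s \in I \times \{s\}$. The base case $r = 0$ is immediate. For the inductive step, the hypothesis gives a non-dummy agent $(0,\alpha_s,s)$ at position $s$, and I examine its partner $(j_{s \oplus 1},\alpha_{s \oplus 1},s \oplus 1)$ in $\muhat$. The key dichotomy is whether this partner lies in $\Phat'_{\alpha_s}$: if it does, then by the definition of $\Phat'_{\alpha_s}$ it has the form $(0,\alpha',s \oplus 1)$ with $\alpha' \in I \times \{s \oplus 1\}$ and $\alpha' \in P_{\alpha_s}$, which advances the induction while simultaneously certifying the preference constraint needed for $(\alpha_0,\ldots,\alpha_{k-1})$ to be a valid family in~$X$.

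If instead the partner fails to lie in $\Phat'_{\alpha_s}$, I would invoke Lemma~\ref{lem:gadget} with $t^* := s$: its hypotheses ($j_s = 0$, $\alpha_s \in I \times \{s\}$, and the partner not in $\Phat'_{\alpha_s}$) are all met, and its conclusion forces $\alpha_{s \oplus r'} = \alpha_s$ for every~$r'$. Taking $r' = k - r$ yields $\alpha_t = \alpha_s$, but $\alpha_t = \alpha \in I \times \{t\}$ while $\alpha_s \in I \times \{s\}$ with $s \neq t$ (the case $r = 0$, i.e.\ $s = t$, is ruled out directly by the hypothesis $\muhat(0,\alpha,t) \in \Phat'_\alpha$), a contradiction. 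Once the induction terminates, $(\alpha_0,\ldots,\alpha_{k-1})$ is a valid family in~$X$, so it belongs to $\mu$ by definition of the induced matching, giving $\mu(\alpha) = \alpha_{t \oplus 1} = \beta$ and hence $\muhat(0,\alpha,t) = (0,\mu(\alpha),t \oplus 1)$.

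The main obstacle is recognizing the structural dichotomy at each inductive step and correctly setting up the application of Lemma~\ref{lem:gadget}. Once one notices that the lemma's conclusion---all second coordinates $\alpha_s$ in the $\muhat$-family coincide---is incompatible with the elementary fact that distinct positions in the family must carry agents whose type coordinates are distinct, the chain of admirers produced by Lemma~\ref{lem:gadget} does the rest of the work and only routine bookkeeping remains.
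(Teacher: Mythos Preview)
Your approach is essentially the same as the paper's: both walk around the $\muhat$-family starting at position~$t$, and at the first position where the next agent fails to lie in the appropriate~$\Phat'$, invoke Lemma~\ref{lem:gadget} to force two coordinates $\alpha_{t'}$ with different type slots to coincide. The paper packages this as a case split on the set $T = \{t' : \alpha_{t'} \in I \times \{t'\}\text{ and the partner lies in }\Phat'_{\alpha_{t'}}\}$, choosing the least $s^*$ with $t \oplus s^* \notin T$; you package it as an induction on~$r$. The resulting contradictions differ only cosmetically (the paper compares $\alpha_{t^*}$ with $\alpha_{t^* \oplus (k-1)}$, you compare $\alpha_s$ with $\alpha_t$), both drawing on the same conclusion of Lemma~\ref{lem:gadget}.

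There is one small gap. Your induction on $r \in \{0,\ldots,k-1\}$ has $k-1$ inductive steps (from $r$ to $r+1$ for $0 \le r \le k-2$), and each step certifies one preference constraint $\alpha_{s \oplus 1} \in P_{\alpha_s}$. That yields only $k-1$ of the $k$ constraints needed for $(\alpha_0,\ldots,\alpha_{k-1})$ to be a family in~$X$; the wrap-around constraint $\alpha_t \in P_{\alpha_{t \oplus (k-1)}}$ is never established, so you cannot yet conclude that this tuple belongs to~$\mu$. The fix is immediate: once the induction finishes and all positions are known to be non-dummy, run your dichotomy one more time at $s = t \oplus (k-1)$; since $s \neq t$, the ``partner not in $\Phat'_{\alpha_s}$'' branch again contradicts Lemma~\ref{lem:gadget}, and the other branch supplies the missing constraint. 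The paper sidesteps this by folding the preference constraint into the definition of~$T$, so that $T = \{0,\ldots,k-1\}$ delivers all $k$ constraints at once.
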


\begin{proof}
For the sake of contradiction, suppose~$\muhat(0, \alpha, t)
\neq (0, \mu(\alpha), t \oplus 1)$. Since $\muhat(0, \alpha, t)$ is
in~$\Phat'_{\alpha}$, we have $( (j_0, \alpha_0, 0), \dots, (j_{k-1},
\alpha_{k-1}, k-1) ) \in \muhat$ for some $j_0, \dots, j_{k-1} \in J$
and $\alpha_0, \dots, \alpha_{k-1} \in A$ such that $(j_t, \alpha_t, t)
= (0, \alpha, t)$ and $(j_{t \oplus 1}, \alpha_{t \oplus 1}, t \oplus 1)$
is in~$\Phat'_{\alpha}$. Let
\begin{equation*}
T = \{ t' \in \{0, \dots, k-1\} \mid
\alpha_{t'} \in I \times \{t'\} \text{ and }
(j_{t' \oplus 1}, \alpha_{t' \oplus 1}, t' \oplus 1) \text{ is in }
\Phat'_{\alpha_{t'}} \}.
\end{equation*}
Then $t \in T$. We consider two cases.

Case~1: $T = \{0, \dots, k-1\}$.
Then for every $t' \in T = \{0, \dots, k-1\}$, we have
$\alpha_{t'} \in I \times \{t'\}$ and
$(j_{t' \oplus 1}, \alpha_{t' \oplus 1}, t' \oplus 1)$ is
in~$\Phat'_{\alpha_{t'}}$. So $j_{t' \oplus 1} = 0$ and
$\alpha_{t' \oplus 1}$ is in~$P_{\alpha_{t'}}$ for every $t' \in
\{0, \dots, k-1\}$. Hence $(\alpha_0, \dots, \alpha_{k-1})$ is
a~valid family in~$X$. Since~$\mu$ is induced by~$\muhat$ and
$( (0, \alpha_0, 0), \dots, (0, \alpha_{k-1}, k-1) ) \in \muhat$,
we have $(\alpha_0, \dots, \alpha_{k-1}) \in \mu$. Thus~$\mu(\alpha)
= \mu(\alpha_t) = \alpha_{t \oplus 1}$, which contradicts
$(0, \mu(\alpha), t \oplus 1) \neq \muhat(0, \alpha, t)
= (0, \alpha_{t \oplus 1}, t \oplus 1)$.

Case~2: $T \neq \{0, \dots, k-1\}$.
Then there exists a~smallest $s^* \in \{1, \dots, k-1\}$ such that
$t \oplus s^* \notin T$. Then $t \oplus (s^* - 1) \in T$. Let
$t^* = t \oplus s^*$. Since $t^* \oplus (-1) = t \oplus (s^* - 1) \in T$, we
have $\alpha_{t^* \oplus (-1)} \in I \times \{ t^* \oplus (-1) \}$
and $(j_{t^*}, \alpha_{t^*}, t^*)$ is in~$\Phat'_{\alpha_{t^* \oplus (-1)}}$.
So $j_{t^*} = 0$ and~$\alpha_{t^*}$ is in~$P_{\alpha_{t^* \oplus (-1)}}$. Hence
$\alpha_{t^*} \in I \times \{t^*\}$. Since $\alpha_{t^*} \in I \times \{t^*\}$
and $t^* = t \oplus s^* \notin T$, agent
$(j_{t^* \oplus 1}, \alpha_{t^* \oplus 1}, t^* \oplus 1)$ is not
in~$\Phat'_{\alpha_{t^*}}$. So Lemma~\ref{lem:gadget} implies
$\alpha_{t^* \oplus (k-1)} = \alpha_{t^*}$. Hence
$\alpha_{t^* \oplus (-1)} = \alpha_{t^* \oplus (k-1)} = \alpha_{t^*}
\in I \times \{t^*\}$, which contradicts $\alpha_{t^* \oplus (-1)} \in
I \times \{t^* \oplus (-1)\}$.
\end{proof}

\begin{proof}[Proof of Lemma~\ref{lem:instability}]
For the sake of contradiction, suppose~$X$ has no weakly stable matching and
$\Xhat$ has a~weakly stable matching~$\muhat$. Let~$\mu$ be the matching in~$X$
induced by~$\muhat$.

Since~$\mu$ is not a~weakly stable matching of~$X$, there exists a~strongly
blocking family $(\alpha_0, \dots, \alpha_{k-1})$. Since~$\muhat$ is a~weakly
stable matching of~$\Xhat$, the family
\begin{equation*}
( (0, \alpha_0, 0), \dots, (0, \alpha_{k-1}, k-1) )
\end{equation*}
is not strongly blocking. So there exists $t \in \{0, \dots, k-1\}$ such that
$(0, \alpha_t, t)$ does not prefer $(0, \alpha_{t \oplus 1}, t \oplus 1)$ to
$\muhat(0, \alpha_t, t)$. Since $(\alpha_0, \dots, \alpha_{k-1})$ is a~family
in~$X$, agent~$\alpha_{t \oplus 1}$ is in~$P_{\alpha_t}$. So
$(0, \alpha_{t \oplus 1}, t \oplus 1)$ is in~$\Phat'_{\alpha_t}$. Hence
$\muhat(0, \alpha_t, t)$ appears in~$\Phat'_{\alpha_t}$ no later
than~$(0, \alpha_{t \oplus 1}, t \oplus 1)$, since~$\Phat'_{\alpha_t}$
is a~prefix of the preference list~$\Phat_{(0, \alpha_t, t)}$.

Since $\muhat(0, \alpha_t, t)$ is in~$\Phat'_{\alpha_t}$,
Lemma~\ref{lem:correspondence} implies $\muhat(0, \alpha_t, t) =
(0, \mu(\alpha_t), t \oplus 1)$. Since $(0, \mu(\alpha_t), t \oplus 1)$
appears in~$\Phat'_{\alpha_t}$ no later
than~$(0, \alpha_{t \oplus 1}, t \oplus 1)$, agent $\mu(\alpha_t)$
appears in~$P_{\alpha_t}$ no later than~$\alpha_{t \oplus 1}$. Hence~$\alpha_t$
does not prefer~$\alpha_{t \oplus 1}$ to $\mu(\alpha_t)$. So
$(\alpha_0, \dots, \alpha_{k-1})$ is not a~strongly blocking family of~$\mu$,
a~contradiction.
\end{proof}

\subsection{Proof of Lemma~\ref{lem:stability}}
\label{sec:kdsmcyc-stability}

The goal of this subsection is to prove Lemma~\ref{lem:stability}. It suffices
to show that every weakly stable matching~$\mu$ in~$X$ induces a weakly stable
matching~$\muhat$ in~$\Xhat$.
We construct the matching~$\muhat$ induced by~$\mu$ as follows.

\begin{itemize}
\item
For every $(\alpha_0, \dots, \alpha_{k-1}) \in \mu$, we include
in~$\muhat$ the family
\begin{equation*}
( (0, \alpha_0, 0), \dots, (0, \alpha_{k-1}, k-1)).
\end{equation*}
\item
For every agent $\alpha \in A$ and $j \in J$ such that $j < (k-1)^2$, we include
in~$\muhat$ the family $( (j + \delta_0(\alpha), \alpha, 0), \dots,
(j + \delta_{k-1}(\alpha), \alpha, k-1))$, where
\begin{equation*}
  \delta_t(\alpha) =
  \begin{cases}
    1 & \text{if } \mu(\alpha) \neq \alpha \text{ and } \alpha \in I \times \{t\} \\
    0 & \text{otherwise}
  \end{cases}
\end{equation*}
\item
For every $t \in \{0, \dots, k-1\}$, let $R_t$ be the list
\begin{equation*}
  \List{(j', \alpha', t') \in \{ (k-1)^2 \} \times A \times \{t\} \mid
  \delta_{t'}(\alpha') = 0 }.
\end{equation*}
We include in~$\muhat$ the family~$(R_0[s], \dots, R_{k-1}[s])$ for every
$0 \leq s < \abs{A} - \abs{\mu}$, where $R_t[s]$ denotes the $(s + 1)$th
element of~$R_t$.
\end{itemize}
Figures~\subref*{fig:gadget-singleton} and~\subref*{fig:gadget-family} show the
gadget under the matching~$\muhat$.

It is straightforward to check that the families in~$\muhat$ induced by
a~matching~$\mu$ are agent-disjoint. Hence~$\muhat$ is a~valid matching
in~$\Xhat$.

\begin{lemma} \label{lem:preference}
Let~$\muhat$ be the matching in~$\Xhat$ induced by a~matching~$\mu$ in~$X$. Let
$t \in \{0, \dots, k-1\}$ and $\alpha \in A$ such that
$\alpha \in I \times \{t\}$. Let $j' \in J$ and $\alpha' \in A$ such that
non-dummy agent $(0, \alpha, t)$ prefers $(j', \alpha', t \oplus 1)$ to
$\muhat(0, \alpha, t)$. Then $(j', \alpha', t \oplus 1)$ is in $\Phat'_{\alpha}$
and~$\alpha$ prefers~$\alpha'$ to~$\mu(\alpha)$.
\end{lemma}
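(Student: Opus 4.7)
The plan is to analyze the identity of $\muhat(0,\alpha,t)$ case-by-case according to whether $\alpha$ is matched in~$\mu$, and then compare its position within the preference list $\Phat_{(0,\alpha,t)}$ against that of $(j',\alpha',t\oplus 1)$.

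Recall that by construction $\Phat_{(0,\alpha,t)}$ consists of $\Phat'_\alpha$, followed by the block $\List{(j'',\alpha'',t\oplus 1) \in J \times A \times \{t \oplus 1\} \mid \alpha'' = \alpha}$ sorted lexicographically, followed by the remaining agents in arbitrary order. Since the middle block is sorted by~$j''$ (with $\alpha''$ and $t\oplus 1$ fixed), its first element is $(0,\alpha,t\oplus 1)$, and no agent of the middle or final block appears in $\Phat_{(0,\alpha,t)}$ earlier than $(0,\alpha,t\oplus 1)$.

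Next I would pin down $\muhat(0,\alpha,t)$. If $\mu(\alpha) \neq \alpha$, the first bullet of the construction of~$\muhat$ places the family $((0,\beta_0,0), \dots, (0,\beta_{k-1},k-1))$ into~$\muhat$, where $(\beta_0,\dots,\beta_{k-1}) \in \mu$ satisfies $\beta_t = \alpha$; hence $\muhat(0,\alpha,t) = (0,\mu(\alpha),t\oplus 1)$, which lies in $\Phat'_\alpha$. If $\mu(\alpha) = \alpha$, then $\delta_{t'}(\alpha) = 0$ for every $t'$, and the $j=0$ instance of the second bullet contributes the family $((0,\alpha,0), \dots, (0,\alpha,k-1))$ to~$\muhat$, so $\muhat(0,\alpha,t) = (0,\alpha,t\oplus 1)$, the very first element of the middle block. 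In both cases, $\muhat(0,\alpha,t)$ lies in $\Phat_{(0,\alpha,t)}$ no later than $(0,\alpha,t\oplus 1)$.

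Therefore any $(j',\alpha',t\oplus 1)$ that the non-dummy agent prefers to $\muhat(0,\alpha,t)$ must appear strictly earlier than $(0,\alpha,t\oplus 1)$, which forces it to sit inside $\Phat'_\alpha$. This gives the first conclusion. By the definition of $\Phat'_\alpha$ as a relabeling of $P_\alpha$, membership $(j',\alpha',t\oplus 1) \in \Phat'_\alpha$ forces $j' = 0$ and $\alpha' \in P_\alpha$, while appearing earlier in $\Phat'_\alpha$ than $\muhat(0,\alpha,t)$ corresponds to appearing earlier in $P_\alpha$ than $\mu(\alpha)$ (or, in the unmatched case, to $\mu(\alpha) = \alpha$ being absent from $P_\alpha$). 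Either way, this is precisely the paper's definition of $\alpha$ preferring $\alpha'$ to $\mu(\alpha)$, yielding the second conclusion. The only subtle point is the unmatched case, where $\muhat(0,\alpha,t)$ falls just outside $\Phat'_\alpha$; the observation that $(0,\alpha,t\oplus 1)$ is the lexicographically minimal element of the middle block is what resolves it cleanly.
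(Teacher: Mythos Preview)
Your proposal is correct and follows essentially the same approach as the paper: a case split on whether $\alpha$ is matched in~$\mu$, identification of $\muhat(0,\alpha,t)$ in each case, and use of the fact that $\Phat'_\alpha \concat \List{(0,\alpha,t\oplus 1)}$ is a prefix of $\Phat_{(0,\alpha,t)}$ to force $(j',\alpha',t\oplus 1)$ into $\Phat'_\alpha$. The only cosmetic difference is that you momentarily unify the two cases via the observation that $\muhat(0,\alpha,t)$ never sits beyond $(0,\alpha,t\oplus 1)$, whereas the paper keeps the two cases separate throughout.
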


\begin{proof}
Notice that $\Phat'_{\alpha} \cdot \List{ (0, \alpha, t \oplus 1) }$ is
a~prefix of the preference list~$\Phat_{(0, \alpha, t)}$ of non-dummy agent
$(0, \alpha, t)$. We consider two cases.

Case~1: $\mu(\alpha) \neq \alpha$.
Then $\muhat(0, \alpha, t) = (0, \mu(\alpha), t \oplus 1)$. Since
$(0, \alpha, t)$ prefers $(j', \alpha', t \oplus 1)$ to
$(0, \mu(\alpha), t \oplus 1)$, agent $(j', \alpha', t \oplus 1)$
appears in $\Phat'_{\alpha}$ before $(0, \mu(\alpha), t \oplus 1)$.
Hence~$\alpha$ prefers~$\alpha'$ to~$\mu(\alpha)$.

Case 2: $\mu(\alpha) = \alpha$.
Then $\muhat(0, \alpha, t) = (0, \alpha, t \oplus 1)$. Since
$(0, \alpha, t)$ prefers $(j', \alpha', t \oplus 1)$ to
${(0, \alpha, t \oplus 1)}$, agent $(j', \alpha', t \oplus 1)$ is
in~$\Phat'_{\alpha}$. Then~$\alpha'$ is in~$P_{\alpha}$, and hence~$\alpha$
prefers~$\alpha'$ to~$\mu(\alpha)$.
\end{proof}

\begin{lemma} \label{lem:boundary}
Let~$\muhat$ be the matching in~$\Xhat$ induced by a~weakly stable
matching~$\mu$ in~$X$. Let $j_0, \dots, j_{k-1} \in J$ and
$\alpha_0, \dots, \alpha_{k-1} \in A$ such that
\begin{equation*}
((j_0, \alpha_0, 0), \dots, (j_{k-1}, \alpha_{k-1}, k-1))
\end{equation*} is a~strongly blocking family of~$\muhat$. Then
$j_t - \delta_t(\alpha_t) \geq (k-1)^2$ for every $t \in \{0, \dots, k-1\}$.
\end{lemma}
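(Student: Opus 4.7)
The plan is to take an arbitrary strongly blocking family $((j_0, \alpha_0, 0), \ldots, (j_{k-1}, \alpha_{k-1}, k-1))$ of~$\muhat$, set $m_t = j_t - \delta_t(\alpha_t)$ for each~$t$, and show $m_t \geq (k-1)^2$ for every~$t$. Since trivially $m_t \leq (k-1)^2$, this amounts to showing $m_t = (k-1)^2$ for every~$t$.

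The first step is to rule out any non-dummy members. If some $(j_t, \alpha_t, t)$ is non-dummy, then $j_t = 0$ and $\alpha_t \in I \times \{t\}$, and Lemma~\ref{lem:preference} applied to the strict preference of $(0, \alpha_t, t)$ for $(j_{t \oplus 1}, \alpha_{t \oplus 1}, t \oplus 1)$ over $\muhat(0, \alpha_t, t)$ yields both that $(j_{t \oplus 1}, \alpha_{t \oplus 1}, t \oplus 1) \in \Phat'_{\alpha_t}$ (which in particular forces that agent to be non-dummy as well) and that $\alpha_t$ prefers $\alpha_{t \oplus 1}$ to $\mu(\alpha_t)$. Iterating around the cycle shows that every member of the blocking family is non-dummy and that $(\alpha_0, \ldots, \alpha_{k-1})$ is a strongly blocking family of~$\mu$ in~$X$, contradicting the weak stability of~$\mu$. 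Hence every $(j_t, \alpha_t, t)$ is a dummy agent.

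The crux is the descent claim: if $m_t < (k-1)^2$, then $\alpha_{t \oplus 1} = \alpha_t$ and $m_{t \oplus 1} \leq m_t - 1$. Since $(j_t, \alpha_t, t)$ is a dummy with $m_t < (k-1)^2$, it is either a non-boundary dummy (so $j_t < (k-1)^2$) or a boundary dummy with $\delta_t(\alpha_t) = 1$. In the non-boundary case, an unpacking of the construction of~$\muhat$ shows that $(j_t, \alpha_t, t)$ is matched with $(j_t - \delta_t(\alpha_t) + \delta_{t \oplus 1}(\alpha_t), \alpha_t, t \oplus 1)$, and since the preference list of a non-boundary dummy begins with the gadget agents $(0, \alpha_t, t \oplus 1), (1, \alpha_t, t \oplus 1), \ldots, ((k-1)^2, \alpha_t, t \oplus 1)$ in that order, any strictly preferred agent must lie in that gadget strictly earlier; this gives $\alpha_{t \oplus 1} = \alpha_t$ and $j_{t \oplus 1} \leq j_t - \delta_t(\alpha_t) + \delta_{t \oplus 1}(\alpha_t) - 1$, which rearranges to $m_{t \oplus 1} \leq m_t - 1$. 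In the boundary case with $\delta_t(\alpha_t) = 1$, the match is $((k-1)^2 - 1, \alpha_t, t \oplus 1)$, the last element of the gadget prefix of the preference list; strict preference for $(j_{t \oplus 1}, \alpha_{t \oplus 1}, t \oplus 1)$ therefore forces $\alpha_{t \oplus 1} = \alpha_t$ and $j_{t \oplus 1} \leq (k-1)^2 - 2$, so again $m_{t \oplus 1} \leq m_t - 1$.

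Finally, suppose for contradiction that $m_{t_0} < (k-1)^2$ for some~$t_0$. A simple induction using the descent claim yields $m_{t_0 \oplus s} \leq m_{t_0} - s$ for every $s \in \{0, 1, \ldots, k\}$; setting $s = k$ gives $m_{t_0} \leq m_{t_0} - k$, a contradiction. Hence $m_t = (k-1)^2$ for every~$t$, which is the desired conclusion. The main obstacle is the case analysis for the descent claim---correctly identifying the matching partner and the truncated prefix of the preference list in both the non-boundary and the boundary-with-$\delta_t(\alpha_t) = 1$ subcases---after which the cyclic descent is immediate.
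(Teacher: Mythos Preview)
Your proof is correct and follows essentially the same approach as the paper's. Both arguments hinge on the same descent inequality $j_{t\oplus 1}-\delta_{t\oplus 1}(\alpha_{t\oplus 1})<j_t-\delta_t(\alpha_t)$ for dummy agents with $m_t<(k-1)^2$, established via the identical case split (non-boundary dummy versus boundary dummy with $\delta_t=1$), and both use Lemma~\ref{lem:preference} in the same way to dispose of non-dummy agents; the only cosmetic difference is that the paper picks the minimizer of $m_t$ and derives a one-step contradiction, whereas you iterate the descent $k$ times around the cycle.
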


\begin{proof}
Let $t^* \in \{0, \dots, k-1\}$ such that
\begin{equation*}
j_{t^*} - \delta_{t^*}(\alpha_{t^*})
= \min_{t \in \{0, \dots, k-1\}} (j_t - \delta_t(\alpha_t)).
\end{equation*}
For the sake of contradiction, suppose
$j_{t^*} - \delta_{t^*}(\alpha_{t^*}) < (k-1)^2$. We consider two cases.

Case~1: $j_{t^*} = 0$ and $\alpha_{t^*} \in I \times \{t^*\}$.
Let $T = \{ t \mid j_t = 0 \text{ and } \alpha_t \in I \times \{t\} \}$.
Then $t^* \in T$. We consider two subcases.

Case~1.1: $T = \{0, \dots, k-1\}$.
Then for every $t \in \{0, \dots, k-1\} = T$, since $(0, \alpha_t, t)$
prefers $(0, \alpha_{t \oplus 1}, t \oplus 1)$ to $\muhat(0, \alpha_t, t)$,
Lemma~\ref{lem:preference} implies that~$\alpha_t$
prefers~$\alpha_{t \oplus 1}$ to~$\mu(\alpha_t)$. Hence
$(\alpha_0, \dots, \alpha_{k-1})$ is a~strongly
blocking family of~$\mu$, which contradicts the stability of~$\mu$.

Case~1.2: $\{t^*\} \subseteq T \subsetneq \{0, \dots, k-1\}$.
Then there exists~$s^*$ such that $s^* \in T$ and $s^* \oplus 1 \notin T$.
Since $s^* \in T$, we have $j_{s^*} = 0$ and
$\alpha_{s^*} \in I \times \{s^*\}$. Since $(0, \alpha_{s^*}, s^*)$ prefers
$(j_{s^* \oplus 1}, \alpha_{s^* \oplus 1}, s^* \oplus 1)$ to
$\muhat(0, \alpha_{s^*}, s^*)$, Lemma~\ref{lem:preference} implies that
$(j_{s^* \oplus 1}, \alpha_{s^* \oplus 1}, s^* \oplus 1)$ is
in~$\Phat'_{\alpha_{s^*}}$. Hence $j_{s^* \oplus 1} = 0$ and
$\alpha_{s^* \oplus 1} \in I \times \{s^* \oplus 1\}$, which contradicts
$s^* \oplus 1 \notin T$.

Case~2: Either $j_{t^*} \neq 0$ or $\alpha_{t^*} \notin I \times \{t^*\}$.
Thus $(j_{t^*}, \alpha_{t^*}, t^*)$ is a~dummy agent. We consider two subcases.

Case~2.1: $j_{t^*} < (k-1)^2$.
Since
\begin{equation*}
\muhat(j_{t^*}, \alpha_{t^*}, t^*)
= (j_{t^*} + \delta_{t^* \oplus 1}(\alpha_{t^*})
  - \delta_{t^*}(\alpha_{t^*}), \alpha_{t^*}, t^* \oplus 1),
\end{equation*}
and the non-boundary dummy agent $(j_{t^*}, \alpha_{t^*}, t^*)$ prefers
$(j_{t^* \oplus 1}, \alpha_{t^* \oplus 1}, t^* \oplus 1)$ to
$\muhat(j_{t^*}, \alpha_{t^*}, t^*)$, we have
$j_{t^* \oplus 1} < j_{t^*}
  + \delta_{t^* \oplus 1}(\alpha_{t^*}) - \delta_{t^*}(\alpha_{t^*})$, which
contradicts the definition of~$t^*$.

Case~2.2: $j_{t^*} = (k-1)^2$.
Then $\delta_{t^*}(\alpha_{t^*}) = 1$ since
$j_{t^*} - \delta_{t^*}(\alpha_{t^*}) < (k-1)^2$.
So $\alpha_{t^*} \in I \times \{t^*\}$, and hence
$\delta_{t^* \oplus 1}(\alpha_{t^*}) = 0$. Since
\begin{equation*}
\muhat(j_{t^*}, \alpha_{t^*}, t^*)
= (j_{t^*} - 1, \alpha_{t^*}, t^* \oplus 1)
\end{equation*}
and the boundary dummy agent $(j_{t^*}, \alpha_{t^*}, t^*)$ prefers
$(j_{t^* \oplus 1}, \alpha_{t^* \oplus 1}, t^* \oplus 1)$ to
$\muhat(j_{t^*}, \alpha_{t^*}, t^*)$, we have
$j_{t^* \oplus 1} < j_{t^*} - 1 = j_{t^*}
  + \delta_{t^* \oplus 1}(\alpha_{t^*}) - \delta_{t^*}(\alpha_{t^*})$, which
contradicts the definition of~$t^*$.
\end{proof}

\begin{proof}[Proof of Lemma~\ref{lem:stability}]
Suppose~$X$ has a~weakly stable matching~$\mu$. Let~$\muhat$ be the matching
in~$\Xhat$ induced by~$\mu$. It suffices to show that~$\muhat$ does not admit
a~strongly blocking family.

For the sake of contradiction, suppose~$\muhat$ admits a~strongly blocking
family
\begin{equation*}
((j_0, \alpha_0, 0), \dots, (j_{k-1}, \alpha_{k-1}, k-1)).
\end{equation*}
Lemma~\ref{lem:boundary} implies that for every $t \in \{0, \dots, k-1\}$, we
have $j_t - \delta_t(\alpha_t) \geq (k-1)^2$. Since $j_t \leq (k-1)^2$ and
$\delta_t(\alpha_t) \geq 0$, we deduce that $j_t = (k-1)^2$ and
$\delta_t(\alpha_t) = 0$ for every $t \in \{0, \dots, k-1\}$. Hence for every
$t \in \{0, \dots, k-1\}$, there exists~$s_t$ such that
$(j_t, \alpha_t, t) = R_t[s_t]$.

Let $t^* \in \{0, \dots, k-1\}$ such that
\begin{equation*}
s_{t^*} = \min_{t \in \{0, \dots, k-1\}} s_t.
\end{equation*}
Since $\muhat(R_{t^*}[s_{t^*}]) = R_{t^* \oplus 1}[s_{t^*}]$ and
the boundary dummy agent $R_{t^*}[s_{t^*}]$ prefers boundary dummy agent
$R_{t^* \oplus 1}[s_{t^* \oplus 1}]$ to boundary dummy agent
$\muhat(R_{t^*}[s_{t^*}])$, we deduce that
$R_{t^* \oplus 1}[s_{t^* \oplus 1}]$ is lexicographically smaller than
$R_{t^* \oplus 1}[s_{t^*}]$. Hence $s_{t^* \oplus 1} < s_{t^*}$, which
contradicts the definition of~$t^*$.
\end{proof}

\section{Concluding Remarks}
\label{sec:conclusion}

We have shown that a~$\DSMCYC{3}$ instance need not admit a~weakly stable
matching, and that it is NP-complete to determine whether a~given $\DSMCYC{3}$
instance admits a~weakly stable matching. It seems that for the
three-dimensional stable matching problem, none of the preference structures
studied in the literature admits a~non-trivial generalization of the existence
theorem of Gale and Shapley. (The existence result in Danilov's
model~\cite{Dan03} follows from applying the Gale-Shapley algorithm in
a~straightforward manner.) It would be interesting to consider solution
concepts such as popular matchings instead of stable matchings in the
multi-dimensional matching context.

\bibliographystyle{plainnat}
\bibliography{kdsmcyc}

\end{document}